\newcommand{\Ignore}[1]{\ignorespaces}
\newtheorem{theorem}{Theorem}[section]
\newtheorem{lemma}[theorem]{Lemma}
\newtheorem{observation}[theorem]{Observation}
\newtheorem{corollary}[theorem]{Corollary}
\newtheorem{conjecture}[theorem]{Conjecture}
\newcommand{\Reals}{\mathbb{R}}
\newcommand{\DM}{\ensuremath{\mathtt{DM}}}
\newcommand{\CritG}{\ensuremath{\mathrm{C}_{\mathit{g}}}}
\newcommand{\CritB}{\ensuremath{\mathrm{C}_{\mathit{b}}}}
\newcommand{\CritP}{\ensuremath{\mathrm{C}_{\mathit{p}}}}
\newcommand{\CritGw}{\ensuremath{\mathrm{C}_{\mathit{g}}^{\mathit{w}}}}
\newcommand{\CritBw}{\ensuremath{\mathrm{C}_{\mathit{b}}^{\mathit{w}}}}
\newcommand{\Alg}{\ensuremath{\mathtt{Alg}}}
\begin{document}
	
\title{Stable Secretaries}
\thanks{This paper will be presented at the \emph{18th ACM conference on Economics and Computation (EC 2017)}.}

\author[Babichenko]{Yakov Babichenko}
\address{Technion, Israel}
\email{yakovbab@tx.technion.ac.il}
\thanks{Y. Babichenko was supported by the Israel Science Foundation grant number 2021296.}

\author[Emek]{Yuval Emek}
\address{Technion, Israel}
\email{yemek@technion.ac.il}

\author[Feldman]{Michal Feldman}
\address{Tel Aviv University, Israel}
\email{mfeldman@tau.ac.il}
\thanks{The work of M. Feldman was partially supported by the European Research Council under the European Union's Seventh Framework Programme (FP7/2007-2013) / ERC grant agreement number 337122.}

\author[Pratt-Shamir]{Boaz Patt-Shamir}
\address{Tel Aviv University, Israel}
\email{boaz@tau.ac.il}

\author[Peretz]{Ron Peretz}
\address{Bar Ilan University, Israel}
\email{ron.peretz@biu.ac.il}

\author[Smorodinsky]{Rann Smorodinsky}
\address{Technion, Israel}
\email{rann@ie.technion.ac.il}
\thanks{R. Smorodinsky was supported by Technion VPR grants, the Bernard M. Gordon Center for Systems Engineering at the Technion, and the TASP Center at the Technion.}

\begin{abstract}
We define and study a new variant of the secretary problem.
Whereas in the classic setting multiple secretaries compete for a single
position, we study the case where the secretaries arrive one at a time and are
assigned, in an on-line fashion, to one of \emph{multiple} positions.
Secretaries are ranked according to talent, as in the original formulation,
and in addition positions are ranked according to attractiveness.
To evaluate an online matching mechanism, we use the notion of \emph{blocking
pairs} from stable matching theory:
our goal is to maximize the number of positions (or secretaries) that do not
take part in a blocking pair.
This is compared with a \emph{stable matching} in which no blocking pair
exists.
We consider the case where secretaries arrive randomly, as well as that of an
adversarial arrival order, and provide corresponding upper and lower bounds.
\end{abstract}

\maketitle

\section{Introduction}
The celebrated \emph{secretary problem}, which first appeared in print in
Martin Gardner's 1960 \emph{Scientific American} column \cite{G60} (but
apparently originated much earlier, see \cite{F89}), considers a simple online
problem where multiple applicants interview sequentially for an open position
(say a secretary).
The interview of an applicant allows the employer to assess the relative
quality of this applicant with respect to all those interviewed so far.
An irreversible decision whether to hire or reject an applicant must
be made as soon as the applicant's interview is over.
In particular, the decision is taken without knowing anything about the
quality of future applicants.
The problem is then to find a scheme that will maximize the probability of
choosing the best applicant.

The problem gained considerable popularity and many variants have been
introduced and studied since.
Some of the classical extensions have been
(1) to move from an \emph{ordinal} setting, where applicants are ranked one
compared to the other (a.k.a.\ the \emph{comparison-based} model), to a
\emph{cardinal} setting, where applicants are identified with an absolute
score;
(2) to generalize the original \emph{uniform} arrival order to other
distributions as well as to an adversarial arrival order;
(3) to choose an applicant from the top tier instead of necessarily the best
one;
and
(4) to choose more than one secretary.
For more information, the reader is referred to Ferguson \cite{F89} for
an early survey  of the history of the secretary problem and to
Dinitiz \cite{D13} who gives a survey of relatively recent results with an
emphasis on applications to auction theory.

For the most part, the models and extensions considered throughout the years
maintain the invariant that the hiring challenge involves a single
position or, in some generalizations, several identical positions.%
\footnote{The unique exception we are familiar with is the decentralized model
considered in \cite{ChenHKLM2015}.}
However, many practical applications call for the need to perform many-to-many
matchings, where candidates arrive sequentially and are matched to a fixed
pool of \emph{non-identical} positions.
For example, in the labor market it is often the case that multiple applicants
arrive sequentially and are assigned to various existing job openings within a
firm.
Likewise,
in some dating services, men arrive sequentially and are matched to a fixed
pool of women,
in transportation services (e.g., Uber), passengers arrive sequentially (say,
getting out of an airport terminal) and are matched to a pool of drivers,
and
in a reviewing process for journals, the manuscripts
arrive sequentially and are assigned to the various members of the editorial
board.

In this paper, we introduce a new version of the secretary problem, where the
primary novelty is to consider multiple non-identical positions.
Applicants are matched to one of the positions (or left jobless) by some
central authority, e.g., a human resources department.
As in the original problem, applicants are interviewed sequentially and their
relative rank among applicants interviewed thus far is determined (namely, we
consider an ordinal setting).
At the end of the interview, the applicant is assigned to (at most) one
position in an irreversible manner.

The challenge is to match applicants to positions in an optimal way, but what
is the optimality criterion when non-identical positions are involved?
We augment the standard model by adding a preference order over the various
positions.
In other words, positions are not equally attractive (e.g., the related
salaries are different), however any position is preferred to unemployment.
To evaluate an online matching mechanism, we use the notion of \emph{blocking
pairs} from stable matching theory (\cite{GS62,GI89}).

To explain this notion, consider an arbitrary matching.
A pair, made of a position and an applicant, is said to form a \emph{blocking
pair} if they both prefer each other over those they are matched with.
We would like to maximize the number of positions (or applicants) that do
\emph{not} take part in a blocking pair.
This is compared against an optimal off-line \emph{stable matching} that
admits no blocking pairs.%
\footnote{The original formulation of the secretary problem is equivalent to
finding the unique stable matching, assuming that all secretaries prefer
employment to unemployment.}
An alternative objective is to maximize the probability of producing a
stable matching;
this, however, is virtually impossible as one can easily show that
when the applicants' arrival order is chosen uniformly at random, the
probability to generate a stable matching decreases exponentially with the
number of positions/applicants.

We consider three variants of the problem inspired by potential underlying
business models of the central matching agency:
in some cases, it is the employer that pays (this is indeed typical to the
labor market);
in other cases, it is the sequentially arriving applicants that pay (the
common practice in some professional companionship services);%
\footnote{Such services for matching professional companions for the elderly
are popular in some countries.}
and finally, in other cases the matching agency is interested in satisfying
both sides of the market.
We study scenarios with either random or adversarial applicant arrival order
and provide corresponding upper and lower bounds.

\subsection{The Model.}
We now turn to a formal exposition of the abstract model investigated in this
paper.
Throughout, we consider finite totally ordered \emph{girl} set $G$
(corresponding to the positions or the employers behind them) and \emph{boy}
set $B$ (corresponding to the applicants) with $\succ$ denoting the order
relation referred to hereafter as a \emph{preference}.
While the girls and their total order are known to the decision maker
(denoted by \DM{}) in advance, the boys arrive in an \emph{online fashion} so
that boy
$\pi(t) \in B$
arrives at time $t$ for
$t = 1, \dots, |B|$,
where $\pi$ is an (initially hidden) permutation over $B$.
Unless stated otherwise, it is assumed that the number $|B|$ of boys is known
to \DM{} in advance.

Upon arrival of boy
$b = \pi(t)$,
its relative rank among boys
$\pi(1), \dots, \pi(t)$
is reported to \DM{}.
In response, \DM{} matches $b$ to some girl in $G$ that was not matched
beforehand or leaves $b$ unmatched;
this (un)matching operation is irrevocable.
We treat an unmatched individual as being matched to the designated symbol
$\bot$ and extend the definition of the preference $\succ$ so that
$x \succ \bot$
for every individual
$x \in G \cup B$.

Consider the situation after all boys have arrived.
Girl $g$ and boy $b$ form a \emph{blocking pair} if
$g \succ g(b)$
and
$b \succ b(g)$,
where $g(b)$ (resp., $b(g)$) denotes the girl (resp., boy) matched to boy $b$
(resp., girl $g$) or $\bot$ if $b$ (resp., $g$) was left unmatched.
Girl $g$ (resp., boy $b$) is said to be \emph{satisfied} if she (resp., he)
is matched and does not participate in a blocking pair.
A $(g, b)$ matched pair is said to be \emph{satisfied} if both $g$ and $b$ are
satisfied.
The objective of \DM{} is to maximize one of the following three criteria: \\
\underline{\CritG{}:} the number of satisfied girls; \\
\underline{\CritB{}:} the number of satisfied boys; or \\
\underline{\CritP{}:} the number of satisfied matched pairs. \\
Since a stable matching induces
$n = \min\{ |G|, |B| \}$
stable pairs, \DM{} aims for algorithms that guarantee to satisfy (in
expectation)
$\rho n$
girls (\CritG{}), boys (\CritB{}), or matched pairs (\CritP{}) for as large as possible
approximation ratio $\rho$, typically expressed as
$\rho = \rho(n)$.

The aforementioned setting can be generalized by augmenting the girls (resp.,
boys) with a \emph{weight} function
$w : G \rightarrow \Reals_{> 0}$
(resp.,
$w : B \rightarrow \Reals_{> 0}$).
In that case, the objective of \DM{} is to maximize one of the following two
criteria: \\
\underline{\CritGw{}:} the total weight of satisfied girls; or \\
\underline{\CritBw{}:} the total weight of satisfied boys. \\
(The weighted version of satisfying matched pairs is not treated in this
paper.)
Taking
$H_{G} \subseteq G$
(resp.,
$H_{B} \subseteq B$)
to be the subset consisting of the
$n = \min \{ |G|, |B| \}$
heaviest (in terms of $w$, breaking ties arbitrarily) girls (resp.,
boys), we observe that the total weight of satisfied girls (resp.,
boys) in an optimal (stable) matching is $w(H_{G})$ (resp., $w(H_{B})$).%
\footnote{We follow the convention that the weight of a set is the total
weight of the set's elements.}
Therefore, in the weighted setting, \DM{} aims for algorithms that guarantee
to satisfy girls (resp., boys) whose total weight (in expectation) is
$\rho w(H_{G})$
(resp.,
$\rho w(H_{B})$)
for as large as possible approximation ratio $\rho$, typically expressed as
$\rho = \rho(n)$.

It will be convenient to assume that the boy arrival permutation $\pi$ is
chosen according to some probability distribution $\Pi$.
When not stated otherwise, $\Pi$ is assumed to be uniform, but
we also consider the case where $\Pi$ is designed by an oblivious adversary
(this is restricted to Section~\ref{section:adversarial-arrival-order}).
The guarantee of \DM{}'s algorithmic strategy is taken in expectation over the
distribution $\Pi$ and possibly also over the random coin tosses of \DM{} (if
it is randomized).
To avoid confusion, we emphasize that the order relation $\succ$
and weights $w$ (in the weighted version) are determined \emph{before} the
random choice of $\pi$ is performed (say, by a designated nature player).

\subsection{Our Contribution.}
Our contribution is both conceptual and technical.
Conceptually, we consider the problem of a central authority that assigns
applicants to one of many non-identical positions.
Allowing a variety of positions introduces the challenge of identifying the
criterion by which one should measure the quality of a match.
We propose some formal criteria, inspired by the concept of stable marriage,
to measure the quality of an online assignment.

Beyond the conceptual contribution suggested above, we provide upper and lower
bounds on the performance of online assignment algorithms (refer to
Table~\ref{table:results} for a summary).
Our main results concern the unweighted case:
when the arrival order is random (distributed uniformly), one can satisfy
$\Omega (n)$ positions/applicants (corresponding to girls/boys,
Theorem~\ref{theorem:positive-Cg-Cb}), however, no (randomized) algorithm can
guarantee more than $O(\sqrt n)$ satisfied matched pairs
(Theorem~\ref{theorem:negative-Cp});
on the other hand, if the arrival order is adversarial, then any (randomized)
algorithm can satisfy at most $O(\sqrt n)$ positions/applicants
(Theorem~\ref{theorem:negative-adversarial-Cg-Cb}) and at most $1$ matched pair
(Theorem~\ref{theorem:negative-adversarial-Cp}).
We further consider the case of weighted candidates and positions.
Here, we show that the total weight of satisfied positions in an optimal
(stable) matching can be approximated within an
$\Omega (1 / \log n)$
ratio (Theorem~\ref{theorem:positive-wCg}) and the total weight of satisfied
applicants in an optimal (stable) matching can be approximated within an
$\Omega (1)$
ratio (Theorem~\ref{theorem:positive-wCb}).

\begin{table}
\begin{center}
\begin{tabular}{|c|c|c|c|c|c|c|}
\hline
\multicolumn{2}{|c|}{}
&
\multicolumn{5}{|c|}{optimization criterion}
\\ \cline{3-7}
\multicolumn{2}{|c|}{}
&
\CritG{} & \CritB{} & \CritP{} & \CritGw{} & \CritBw{}
\\ \hline
\multirow{4}{*}{arrival}
&
\multirow{2}{*}{uniform random}
&
$\Omega (1)$
&
$\Omega (1)$
&
$O (1 / \sqrt{n})$
&
$\Omega (1 / \log n)$
&
$\Omega (1)$
\\
& &
[\ref{theorem:positive-Cg-Cb}]
&
[\ref{theorem:positive-Cg-Cb}]
&
[\ref{theorem:negative-Cp}]
&
[\ref{theorem:positive-wCg}]
&
[\ref{theorem:positive-wCb}]
\\ \cline{2-7}
&
\multirow{2}{*}{adversarial}
&
$O (1 / \sqrt{n})$
&
$O (1 / \sqrt{n})$
&
1
& & \\
& &
[\ref{theorem:negative-adversarial-Cg-Cb}]
&
[\ref{theorem:negative-adversarial-Cg-Cb}]
&
[\ref{theorem:negative-adversarial-Cp}]
& &
\\ \hline
\end{tabular}
\end{center}
\caption{\label{table:results}%
Our main technical results.
Each cell specifies a bound on the achievable approximation ratio
$\rho = \rho(n)$
for the given optimization criterion (columns) and arrival order distribution
(rows).
The corresponding theorem numbers are specified in brackets.
}
\end{table}

\subsection{Related Work.}
In contrast to our optimization criteria that aim at maximizing the number of
satisfied positions (or applicants or matched pairs), i.e., positions that do
not participate in a blocking pair, some previous papers
\cite{KMV94,ABM05,EH08,OR15} follow the complement approach of minimizing the
number of blocking pairs (which can be quadratic).
Notice that these two criteria are very different as a single agent may
contribute to multiple blocking pairs.
We believe that our approach is more natural when looking at our problems from
the perspective of generalizing the classic secretary problem:
instead of aiming at satisfying a single position (the classic secretary
problem), we now try to satisfy as many positions as possible out of $n$
non-identical positions.

Recall that most of our technical focus is dedicated to the balanced scenario
where
$|G| = |B|$.
Interestingly, in the static model the balanced scenario is a knife-edge case
for some phenomena such as multiplicity of stable outcomes (e.g.,
\cite{AshlagiKL2016}).
This is in contrast to our online setting, where unbalanced scenarios can be
reduced to the balanced one, possibly with a constant loss in the guaranteed
approximation ratio.

Optimization criterion \CritP{} can be compared to the one considered by
~\cite{HMV16}.
In \cite{HMV16}, a subset of the participants can be ignored (i.e., not
matched) and the matching is required to be stable with respect to the matched
participants.
The objective is then to minimize the number of omitted participants.

The economic literature on \emph{dynamic matching} often focuses on the tension
between market thickness and participants' waiting time.%
\footnote{Not to be confused with the algorithmic literature on the maximum
matching problem in dynamic graphs.}
The idea is that participants join the market and can be matched thereafter
at any given point of time.
The longer one waits with the matching, the thicker the market becomes and so
it may be possible to find a better match.
On the other hand, participants may lose utility due to the waiting time
(e.g., health deterioration in the context of the kidney matching market).
Some examples that study this tension are \cite{BLY15} and \cite{AshlagiBJM16} 
(see also \cite{EmekKW2016} for a related online matching formulation).
In contrast with our work where the number of agents is finite and hence a
hindsight benchmark for comparing the outcome of an online mechanism  is
natural, these models consider an infinite stream of agents whose preferences
are stochastic, generated by a stationary source.
The objective function in this case is not to minimize some criterion in
hindsight, but rather to maximize the total expected utility, taking into
account both the utility from each match and the agents' waiting times.

The online nature of the maximization problems studied in the current paper is
inspired by the online bipartite matching model of Karp, Vazirani, and Vazirani
\cite{KarpVV1990}, where the nodes in one side of a bipartite graph are known
from the beginning and the nodes in the other side arrive in an online
fashion together with their incident edges.
This model became very popular with quite a few papers aiming at maximizing
the size of the matching
\cite{KarpVV1990, BirnbaumM2008, GoelM2008, DevanurJK2013, Miyazaki2014,
NaorW2015}
or the weight of the matched nodes on the static side
\cite{AggarwalGKM2011, DevanurJK2013, NaorW2015}.
\cite{AggarwalGKM2011} also show that in the general case, no
online algorithm can guarantee a non-trivial competitive ratio on the weight
of the edges included in the output matching (the weighted nodes setting is a
special case of weighted edges, where all edges incident on the same static
node admit the same weight).
In contrast, \cite{KesselheimRTV13} prove that under a random
arrival order, this problem can be approximated within a constant factor.
Notice that the graph topology and the edge weights (if the edges are
weighted) implicitly induce a set of cardinal preferences, where a heavier
edge is preferable to a lighter one (the weighted case) and any edge is
preferable to no edge at all.
However, the preferences that can be defined this way are inherently
symmetric and as such, form a strict subset of the (ordinal) preferences
considered in the current paper.

A different line of work studied variants of the secretary problem where the
algorithm designer should select any subset of the arriving candidates subject
to some combinatorial constraints.
This line of research has received significant attention recently, in part due
to applications to auction theory and mechanism design.
The most famous
variant is the matroid secretary problem, where the chosen subset forms an
independent set in a matroid \cite{BIK07,L14,FSV15}, but recent work
considered more general combinatorial constraints as well \cite{R16}.
Secretary settings with non-uniform random arrival orders have been
investigated in \cite{KKN15}.

\section{General Transformations}
\label{section:general-transformations}
We begin with ``black-box'' lemmas that help us develop a better understanding
of the different optimization criteria.
To that end, we say that the matching algorithm is \emph{conservative} if it
is guaranteed to output a (size-wise) maximum matching.
Alternatively, a conservative algorithm may decide to leave a boy unmatched
only if the number of pending boys is at least as large as the number of yet
unmatched girls.

\begin{lemma} \label{lemma:conservative-Cb-Cp}
Optimization criteria \CritBw{} and \CritP{} admit optimal conservative
approximation algorithms.
This holds for every arrival order distribution.
\end{lemma}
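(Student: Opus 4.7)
My plan is to prove the lemma by combining a monotonicity property of the two criteria with an online conversion from an arbitrary algorithm to a conservative one. The monotonicity statement I would establish is that for any matching $M$ and any girl $g$ and boy $b$ that are both unmatched in $M$, the augmented matching $M' = M \cup \{(g, b)\}$ has value under $\CritBw$ (resp.\ $\CritP$) at least that of $M$. Equipped with this property, an arbitrary algorithm $\Alg$ can be converted to a conservative algorithm $\Alg'$ whose value on each realization of $\pi$ dominates that of $\Alg$, by forcing a match at every step where $\Alg$ would have left a boy unmatched in violation of the conservative condition.

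For the monotonicity of $\CritBw$, I would fix any satisfied matched boy $b'$ in $M$ with $g' = g(b')$. The fact that $(g, b')$ is not a blocking pair of $M$, combined with $b' \succ \bot = b(g)$, forces $g \not\succ g'$. Passing to $M'$, the partner of every boy other than $b$ is unchanged, so the only new candidate blocking pair involving $b'$ is $(g, b')$ itself; this is ruled out by $g \not\succ g'$, so $b'$ remains satisfied in $M'$. The boy $b$ himself may or may not be satisfied in $M'$, but in any case contributes non-negatively to the weighted count. A symmetric argument on the girl side shows that every satisfied girl in $M$ remains satisfied in $M'$, which combined with the boy argument yields that every satisfied pair in $M$ is a satisfied pair in $M'$; this takes care of $\CritP$.

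For the conversion, I would define $\Alg'$ to simulate $\Alg$, taking the same actions whenever they are feasible in $\Alg'$'s own state and consistent with conservativeness, and otherwise matching the arriving boy to the least-preferred currently unmatched girl. The main obstacle is that once $\Alg'$ has preemptively matched a girl (as part of a forced conservative match), $\Alg$ may later attempt to match a future boy to that same girl, rendering the attempted decision infeasible for $\Alg'$. I would handle this via a backward induction on the step index: the last step can be made conservative without any cascading effect, and once the last $k$ steps are conservative, modifying the $(k+1)$-th-from-last step merely inserts a single additional pair into the output on each realization, a modification to which the monotonicity lemma applies. Taking expectations over the distribution of $\pi$ then completes the proof.
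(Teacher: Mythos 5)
Your monotonicity lemma is correct and is indeed the engine behind half of the paper's argument: adding a pair $(g,b)$ between two previously unmatched individuals can only improve \CritBw{} and \CritP{}, because the only partners that change are those of $g$ and $b$, and they change from $\bot$ to something strictly preferred, which can only destroy blocking pairs, never create them. The gap is in the conversion. The claim that, once the last $k$ steps are conservative, fixing the $(k{+}1)$-th-from-last step ``merely inserts a single additional pair into the output on each realization'' is exactly the cascading problem you identified, restated rather than resolved: conservativeness of the later steps constrains only \emph{whether} those steps match the arriving boy, not \emph{which} girl they use, so nothing prevents \Alg{} from later matching some boy to the very girl that $\Alg'$ preemptively consumed. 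When that happens $\Alg'$ must reroute that boy, the two output matchings differ by a swap plus an insertion rather than a single insertion, and your monotonicity lemma no longer applies; worse, downgrading a girl's partner (giving $g$ the skipped boy instead of the boy \Alg{} intended for her) can genuinely turn satisfied boys into unsatisfied ones, so this is not merely a bookkeeping issue.

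The missing idea is a preprocessing step that guarantees, pathwise, that the girl handed to the skipped boy will \emph{never} be touched by \Alg{} in the future. The paper first shows that one may assume \Alg{} never performs a ``weak matching action,'' i.e., never matches the arriving boy $b_i$ to one of the $|G_i|-|B_i|$ weakest available girls: doing so leaves some girl among the $|B_i|$ strongest available ones unmatched forever, which already renders $b_i$ (and every boy matched below her) unsatisfied, so the action can be replaced by leaving $b_i$ unmatched at no cost. Note that this replacement step is a pair \emph{deletion}, not an insertion, and needs its own argument (every boy it could hurt was already unsatisfied); it is not covered by your monotonicity lemma. Once weak matching actions are eliminated, the $|B_i|$-th strongest available girl at any skipped step is guaranteed to remain unmatched on every continuation, so matching the skipped boy to \emph{her} (rather than to the least-preferred available girl, as you propose) really is a conflict-free single-pair insertion, and your monotonicity lemma then finishes the proof. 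Without this normalization your induction does not close.
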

\begin{proof}
We establish the assertion for optimization criterion \CritBw{};
the proof for optimization criterion \CritP{} is based on the same line of
arguments.
Let
$b_{1}, \dots, b_{n}$
be the boys indexed in order of arrival.
For
$1 \leq i \leq n$,
let
$B_{i} = \{ b_{i}, b_{i + 1}, \dots, b_{n} \}$
and let $G_{i}$ be the set of girls that are unmatched upon arrival of boy
$b_{i}$.
Matching boy $b_{i}$ to girl
$g \in G$
is said to be a \emph{weak matching} action if
$|G_{i}| > |B_{i}|$
and $g$ is among the
$|G_{i}| - |B_{i}|$
weakest available girls.\footnote{%
Throughout, the terms weak and strong refer to the preference order $\succ$ in
the natural manner.
}

We first argue that optimization criterion \CritBw{} admits an optimal
algorithm that never performs weak matching actions.
To that end, consider some algorithm \Alg{} that performs a weak matching
action by matching boy $b_{i}$ to girl $g$ and let
$S \subseteq G_{i}$
be the set of the $|B_{i}|$ strongest unmatched girls upon arrival of $b_{i}$.
By definition, at least one of the girls in
$g_{s} \in S$
is unmatched in the final outcome of \Alg{}, hence all boys matched to girls
$g' \prec g_{s}$,
including $b_{i}$,
are unsatisfied.
Therefore, the algorithm that mimics \Alg{} at all times other than $i$ and
leaves $b_{i}$ unmatched at time $i$ satisfies the same set of boys that
\Alg{} does.
By repeating this argument over all such times $i$, we come up with an
algorithm that does not perform any weak matching actions and satisfies the
same set of boys as \Alg{}.

So, let \Alg{} be an optimal \CritBw{}-algorithm that never performs weak
matching actions.
Consider some time $i$ such that
$|B_{i}| \leq |G_{i}|$
and let
$g_{1} \succ g_{2} \succ \cdots \succ g_{n - i + 1}$
be the $|B_{i}|$ strongest unmatched girls upon arrival of boy $b_{i}$.
Suppose that \Alg{} leaves $b_{i}$ unmatched.
Since \Alg{} never performs weak matching actions, it follows that girl
$g_{n - i + 1}$
will remain unmatched under \Alg{}.
Therefore, an algorithm that mimics \Alg{} at all times other than $i$ and
matches $b_{i}$ to
$g_{n - i + 1}$
is guaranteed to satisfy all the boys that \Alg{} satisfies.
By repeating this argument over all such times $i$, we obtain an optimal
conservative algorithm.
\end{proof}

\begin{conjecture}
Optimization criterion \CritG{} does not admit an optimal conservative
approximation algorithm under a random arrival order for any sufficiently
large
$n = |B| = |G|$.
\end{conjecture}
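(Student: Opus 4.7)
The plan is to establish the conjecture by locating a reachable state in the online decision process at which leaving the arriving boy unmatched strictly beats every matching action, and then amplifying this to all sufficiently large $n$.

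\emph{Base case.} For small $n_{0}$, the problem is a finite Markov decision process whose states consist of the observed relative-rank history together with the current partial matching. Compute, by backward induction, both the Bellman-optimal expected $\CritG$ over unrestricted online policies and the corresponding value when the policy space is restricted to conservative ones. Locate the smallest $n_{0}$ at which these two values differ, let $\Delta > 0$ denote their difference, and extract from the DP trace a reachable state $s^{\star}$ at which the Bellman-optimal unrestricted action is ``leave unmatched'' whereas every matching action strictly lowers the continuation value. The conservative constraint is precisely what forces a suboptimal action at $s^{\star}$, and this is the source of the $\Delta$ gap.

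\emph{Amplification.} For $n > n_{0}$, consider the positive-probability event $E$ that the first $n - n_{0}$ arrivals are exactly $b_{1}, b_{2}, \ldots, b_{n - n_{0}}$ in this order; on $E$, a ``reasonable'' algorithm matches each of them to the corresponding $g_{i}$, after which the residual problem is isomorphic to the size-$n_{0}$ instance of the base case. Define a non-conservative algorithm $A^{\star}$ for size $n$ that emulates a chosen conservative algorithm $A^{\mathrm{cons}}$ everywhere, except that on $E$ it switches, upon reaching the residual sub-problem, to the optimal unrestricted policy from the base case. Then $A^{\star}$ beats $A^{\mathrm{cons}}$ by at least $\Delta \cdot \Pr[E] > 0$ in expectation, establishing that the latter (and hence any conservative algorithm) is not globally optimal.

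\emph{Main obstacle.} The delicate step is the ``reasonable behavior on $E$'' premise. A conservative algorithm is free to choose \emph{which} girl each arriving boy is matched to and could, in principle, strategically sacrifice girls during the initial $n - n_{0}$ rounds so as to emulate non-conservative behavior on the residual sub-problem --- for instance, by matching a strong early boy to a weak girl to ``pre-burn'' her. One must show, via an exchange argument, that every such deviation from the identity matching on the initial phase itself destroys at least one otherwise satisfied girl, with an expected loss strictly exceeding any gain of $\Delta$ reaped in the residual. Formalizing this trade-off, perhaps through a coupling between the executions of an arbitrary conservative algorithm and of $A^{\star}$, is the technical crux and the primary source of difficulty in carrying out the plan.
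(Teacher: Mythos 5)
This statement is posed in the paper as a \emph{conjecture} and left open; the paper contains no proof of it, so there is nothing to compare your attempt against except its own internal soundness. On that score, your proposal is a plan of attack rather than a proof, and it has two genuine gaps. The first is that the base case is assumed, not established: the entire content of the conjecture (already for a single value of $n$) is that there exists a reachable state at which every conservative action is strictly suboptimal. You instruct the reader to ``locate the smallest $n_{0}$ at which these two values differ,'' but you never exhibit such an $n_{0}$, an explicit gap $\Delta$, or the state $s^{\star}$, nor do you argue that the unrestricted and conservative Bellman values must ever differ. A finite computation for small $n$ would be a legitimate way to supply this, but it has to actually be carried out; as written the argument is circular.

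The second gap is that the amplification step conflicts with the information model. The event $E$ --- that the first $n - n_{0}$ arrivals are exactly the $n - n_{0}$ most preferred boys, in order --- is not observable online: at time $n - n_{0}$ the decision maker knows only the relative ranks of the boys seen so far, and conditioned on that observable history the identity of the arrived set remains uniform over all $(n - n_{0})$-subsets, so $\Pr[E \mid \text{history}] = 1/\binom{n}{n - n_{0}}$. Hence $A^{\star}$ cannot ``switch upon reaching the residual sub-problem on $E$''; it can only switch on the observable event that the first $n - n_{0}$ boys arrived in decreasing relative order, and on the overwhelmingly more likely portion of that event where $E$ fails, the residual problem is not isomorphic to the size-$n_{0}$ base case and the deviation may cost satisfied girls. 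The claimed advantage $\Delta \cdot \Pr[E]$ therefore does not follow without bounding that loss, and no such bound is given. Finally, you yourself identify the exchange/coupling argument against a conservative algorithm strategically ``pre-burning'' girls as the unresolved technical crux; that is precisely the part a proof would have to supply. In short: a reasonable outline for attacking an open problem, but not a proof of it.
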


Next, we turn our attention to \emph{balanced} instances, where
$|G| = |B|$
and prove that these instances are as hard (up to a constant factor) as the
general case for optimization criteria \CritG{} and \CritBw{}.
Moreover, in balanced instances, optimization criteria \CritG{} and \CritB{}
are, in fact, equivalent for conservative algorithms despite the inherent
asymmetry between girls and boys in our setting.

\begin{lemma} \label{lemma:reduce-to-balanced-girls}
There exist universal constants
$\alpha, \beta > 0$
such that
an algorithm that guarantees to approximate optimization criteria \CritG{}
within ratio
$\rho = \rho(n)$
in an instance with
$|G| = |B| = n$
implies an algorithm that guarantees to approximate optimization criteria
\CritG{} within ratio
$\alpha \rho(\beta n)$
in an instance with
$\min \{ |G|, |B| \} = n$.
\end{lemma}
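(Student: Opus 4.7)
The plan is to split into two cases based on which side of the market is larger.

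\emph{Case 1.} When $|G| \geq |B| = n$, let $G^{\ast} \subseteq G$ be the $n$ strongest girls. Simulate the balanced algorithm on the size-$n$ instance $(G^{\ast}, B)$, leaving each girl in $G \setminus G^{\ast}$ permanently unmatched. The key observation is that for any $g \in G^{\ast}$, the blocking-pair condition in the real instance $(G, B)$ coincides with that in the virtual instance $(G^{\ast}, B)$: the unmatched, weaker girls in $G \setminus G^{\ast}$ can neither create nor eliminate a blocking pair involving $g$. Hence the expected number of satisfied girls is at least $\rho(n) \cdot n$, so the choice $\alpha = \beta = 1$ works in this case.

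\emph{Case 2.} When $|B| = m \geq n = |G|$, I would first augment $G$ with $m - n$ \emph{dummy} girls placed at the bottom of the preference order, obtaining a balanced size-$m$ virtual instance $(\widehat{G}, B)$. Simulate the balanced algorithm on $(\widehat{G}, B)$, and in the real instance leave unmatched any boy that the algorithm assigns to a dummy. Since dummies are strictly weaker than every real girl, a boy matched to a dummy or left unmatched contributes identically when checking whether a real girl lies in a blocking pair; hence a real girl is satisfied in the real instance iff she is satisfied in the virtual one.

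To lower-bound the expected number of satisfied real girls, introduce a second randomization: draw a uniformly random subset $R \subseteq [m]$ of size $n$ and assign the real girls (in their given relative order) to the ranks indexed by $R$, filling the remaining ranks with the dummies. Assuming the balanced algorithm is \emph{rank-invariant}, which is without loss of generality by renaming the girls according to rank before invoking it, the set $S \subseteq [m]$ of satisfied ranks depends only on the algorithm's coins and the arrival order, and not on $R$. Linearity of expectation then yields
\[
E[|S \cap R|] \;=\; \frac{n}{m}\cdot E[|S|] \;\geq\; \frac{n}{m}\cdot \rho(m)\cdot m \;=\; \rho(m) \cdot n,
\]
so the approximation ratio in Case 2 is $\rho(m)$.

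The main obstacle is converting the $\rho(m)$ bound into the promised $\alpha\rho(\beta n)$ bound, which is nontrivial when $m \gg n$ and $\rho$ degrades with size. For the benign regime $m \leq c n$ (any constant $c$) the inequality $\rho(m) \geq \rho(c n)$ is immediate and $\beta = c$ suffices. For $m > c n$, the plan is to precede the augmentation by a secretary-style filtering step that, with constant probability, retains a set of $\Theta(n)$ boys containing the top $n$ boys of $B$, reducing to the benign regime at a constant-factor loss. The subtle point in this filtering will be ensuring that no retained boy outside the top $n$ of $B$ manages to render a real girl unsatisfied in the real instance, which essentially requires the filter to remove only boys that are provably weaker than every retained one.
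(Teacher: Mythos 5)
Your Case~1 is correct and is essentially the paper's argument for that direction (the paper ignores an arbitrary subset of $m-n$ girls; restricting to the strongest $n$ works for the same reason). The problems are in Case~2, where $|B| = m \geq n = |G|$.

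First, the averaging step is broken. Your equivalence ``a real girl is satisfied in the real instance iff she is satisfied in the virtual one'' is justified only under the hypothesis that every dummy is weaker than every real girl, but the randomization you need for $E[|S \cap R|] = \tfrac{n}{m} E[|S|]$ destroys exactly that hypothesis: once dummies occupy random ranks, a dummy can sit at a rank \emph{above} a real girl $g$. If the virtual algorithm matches some boy $b'$ to such a dummy, then $(g, b')$ is not a blocking pair in the virtual instance (there $g \succ g(b')$ fails), but after you unmatch $b'$ in the real instance you get $g \succ \bot = g(b')$, so $(g,b')$ blocks whenever $b' \succ b(g)$. Hence a rank in $S \cap R$ does not certify a satisfied real girl, and the claimed bound $\rho(m)\, n$ does not follow. (With dummies pinned at the bottom the transfer is valid, but then nothing prevents the black-box guarantee from being realized entirely on dummies, which is why you introduced the randomization in the first place.)

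Second, even granting that step, the regime $m \gg n$ --- which is where the constants $\alpha, \beta$ actually earn their keep --- is left as a plan rather than a proof, and the plan's acknowledged ``subtle point'' is precisely the crux. The paper resolves it concretely: sacrifice the first $\lceil m/n \rceil$ boys as a filter; by Observation~\ref{observation:good-event}, with probability at least $1/13$ the set $X$ of boys beating the best filter boy has size in $[n/5, n)$; run the balanced algorithm on the $n/5$ \emph{weakest} girls against the first $n/5$ arrivals from $X$, and match the surplus boys of $X$ to the (strictly stronger) non-target girls. Choosing the weakest girls as targets and parking every leftover elite boy on a girl stronger than all targets is exactly what guarantees that no boy outside the sub-instance can block a target girl. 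Finally, your claim $\rho(m) \geq \rho(cn)$ for $m \leq cn$ silently assumes $\rho$ is non-increasing; the paper sidesteps this by always invoking the balanced algorithm on an instance of size exactly $\Theta(n)$ (namely $n/5$), and it also handles the boundary cases $n < c$ and $n > m/c$ separately, which your argument would still need.
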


\begin{lemma} \label{lemma:reduce-to-balanced-boys}
There exist universal constants
$\alpha, \beta > 0$
such that
an algorithm that guarantees to approximate optimization criteria
\CritBw{} within ratio
$\rho = \rho(n)$
in an instance with
$|G| = |B| = n$
implies an algorithm that guarantees to approximate optimization
criteria \CritBw{} within ratio
$\alpha \rho(\beta n)$
in an instance with
$\min \{ |G|, |B| \} = n$.
\end{lemma}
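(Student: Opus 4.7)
I would proceed by case analysis on the sign of $|G| - |B|$; write $n = \min\{|G|,|B|\}$ throughout.

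\emph{Case 1: $|G| \geq |B| = n$.} Here $H_B = B$ and the target is $w(B)$. The key structural observation is that, in any matching on the full instance, a satisfied boy $b$ must be matched to one of the $n$ strongest girls of $G$. Indeed, if the girl $g(b)$ were ranked outside the top $n$, then since at most $|B|=n$ girls can be matched, pigeonhole guarantees that some top-$n$ girl $g^{*} \succ g(b)$ is unmatched; but then $(g^{*},b)$ is a blocking pair, contradicting $b$'s satisfaction. It thus suffices to run the given $\rho(n)$-algorithm on the instance obtained by restricting $G$ to its top-$n$ girls: the restricted instance is balanced of size $n$, and any boy satisfied in the restriction remains satisfied in the original instance (every girl that could conceivably participate in a blocking pair lies, by the observation, already within the restricted girl set). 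This yields the reduction with $\alpha = \beta = 1$ in this case.

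\emph{Case 2: $|B| = m \geq n = |G|$.} The plan is to convert the unbalanced instance into a balanced one of size $\Theta(n)$ to which the given algorithm can be fed, and I would handle two regimes separately. When $m$ is at most a universal constant times $n$, I would pad the girl side with $m-n$ ``dummy'' girls inserted at the bottom of the preference order $\succ$, obtaining a balanced instance of size $m$; I would then run the given algorithm on this padded instance and execute only those matches that assign boys to real girls (boys matched to dummies stay unmatched in the original instance). When $m$ is much larger than $n$, I would precede the padding step with a classical secretary-style thinning phase: observe (without matching) the first half of the boys to learn a weight threshold, then in the second half admit only boys whose weights exceed the threshold, feeding them online to the padded balanced algorithm on the remaining $n$ real girls.

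\emph{Main obstacle.} The principal technical difficulty lies in Case 2, namely lower-bounding the weight of boys that are satisfied by the algorithm on the padded instance and matched to real girls (as opposed to dummies) by a constant fraction of $\rho(\beta n) \cdot w(H_B)$. The key structural handle I would exploit is that every dummy-matched boy who is satisfied in the padded instance must be $\succ$-weaker than every real-matched boy; otherwise, a real girl (which outranks every dummy) together with this boy would form a blocking pair in the padded instance. Combined with the algorithm's weighted-satisfaction guarantee on the padded instance, this should allow one to charge the satisfied dummy-matched weight to a region disjoint from the ``heavy-hitter'' contribution of $H_B$, so that a constant fraction of the guarantee is ultimately realized on real-matched boys. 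The thinning regime adds a secondary source of loss, which is absorbed into the universal constants $\alpha,\beta$ via the standard two-phase analysis underlying the secretary problem.
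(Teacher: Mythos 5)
Your Case~1 ($|G|>|B|$) is correct and coincides with the paper's treatment: restrict to the top $n$ girls and observe that a boy satisfied there cannot form a blocking pair with any of the discarded, unmatched, $\succ$-weaker girls. The gap is in Case~2, precisely at the point you flag as the main obstacle, and it is not repairable by the handle you propose. That handle --- every satisfied dummy-matched boy is $\succ$-weaker than every real-matched boy --- constrains the \emph{preference} order, whereas the benchmark $w(H_{B})$ and the black box's guarantee are stated in terms of the weight function $w$, which is entirely unrelated to $\succ$. Concretely, take $|G|=n$, $|B|=2n$, let the $n$ $\succ$-weakest boys each have weight $W$ and the $n$ $\succ$-strongest boys each have weight $\epsilon\ll W$; then $H_{B}$ is exactly the set of $\succ$-weakest boys and $w(H_{B})=nW$. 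In your padded instance (with $n$ dummy girls appended at the bottom), the matching that sends the light $\succ$-strong boys to the real girls in order of preference and the heavy $\succ$-weak boys to the dummies in order of preference satisfies \emph{every} boy, hence meets any guarantee $\rho\leq 1$ relative to the padded benchmark $w(B)$ --- yet all of the satisfied weight sits on dummy-matched boys, and your reduction returns satisfied weight $n\epsilon$ against a benchmark of $nW$. Since the black box promises only total satisfied weight, nothing prevents it from producing exactly this outcome, so no charging argument can recover a constant fraction of $\rho\cdot w(H_{B})$ on real-matched boys. Your thinning regime inherits the same defect whenever more than $n$ boys pass the threshold (padding is again needed), and is otherwise underspecified: a threshold set at the maximum weight of the first half of the boys admits only $O(1)$ boys in expectation.

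The paper avoids padding altogether. Using Observation~\ref{observation:good-event} with $\ell=5n$, it leaves the first $\lceil m/(5n)\rceil$ boys unmatched, takes the heaviest observed weight as a threshold, and shows that with probability at least $1/13$ the set $X$ of boys exceeding the threshold satisfies $n\leq |X|<5n$; then $w(X)\geq w(H_{B})$, and by uniformity of the arrival order the first $n$ arrivals $Y$ from $X$ carry expected weight at least $w(X)/5$. Running the balanced algorithm on $G$ and $Y$ --- a genuinely balanced instance with no artificial girls --- transfers its guarantee directly, because every girl of the original instance is present in the sub-instance and all boys outside $Y$ are left unmatched, so satisfaction in the sub-instance implies satisfaction in the original one. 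You would need to replace your padding step by a selection step of this kind that reduces the \emph{boy} side to size $n$ rather than inflating the girl side.
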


The proofs of Lemmas \ref{lemma:reduce-to-balanced-girls} and
\ref{lemma:reduce-to-balanced-boys} rely on the following observation (whose
proof is deferred to Appendix~\ref{appendix:missing-proofs})

\begin{observation} \label{observation:good-event}
There exists a constant
$c > 0$
such that for every sufficiently large integer $k$ and for every integer
$c \leq \ell \leq k / c$,
if $\pi$ be a (uniform) random permutation over $[k]$ and
$R = \min\{ \pi(1), \dots, \pi(\lceil k / \ell \rceil) \}$,
then
\[
\Pr(\ell / 5 < R \leq \ell) > 1 / 13 \, .
\]
\end{observation}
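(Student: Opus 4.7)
The plan is to use the standard identity that, for $R$ the minimum of a uniformly random $m$-subset of $[k]$,
\[
\Pr(R > j) \;=\; \binom{k-j}{m}\bigg/\binom{k}{m} \;=\; \prod_{i=0}^{m-1}\left(1 - \frac{j}{k-i}\right),
\]
where $m := \lceil k/\ell \rceil$. Since $R$ is integer-valued, I would rewrite the target probability as $\Pr(R > \lfloor \ell/5 \rfloor) - \Pr(R > \ell)$ and estimate each term separately, aiming to show that their difference is close to $e^{-1/5} - e^{-1} \approx 0.45$, which leaves ample room above $1/13$.

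For the upper bound on $\Pr(R > \ell)$, each factor satisfies $1 - \ell/(k-i) \leq 1 - \ell/k$, giving $\Pr(R > \ell) \leq (1 - \ell/k)^m \leq e^{-m\ell/k} \leq e^{-1}$ by $m \geq k/\ell$. For the lower bound on $\Pr(R > \lfloor \ell/5 \rfloor)$, the factor $1 - j/(k-i)$ is decreasing in $i$, so it is minimized at $i = m-1$, yielding $\Pr(R > j) \geq \left(1 - j/(k-m+1)\right)^m$ for every $j$. Taking $j = \lfloor \ell/5 \rfloor$ and using $m \leq k/\ell + 1$, the hypotheses $c \leq \ell \leq k/c$ give $k - m + 1 \geq k(1 - 1/c)$ and hence
\[
\frac{m\, j}{k - m + 1} \;\leq\; \frac{(k/\ell + 1)(\ell/5)}{k(1 - 1/c)} \;=\; \frac{1 + \ell/k}{5(1 - 1/c)} \;\leq\; \frac{1 + 1/c}{5(1 - 1/c)},
\]
which tends to $1/5$ as $c \to \infty$. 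Applying the elementary inequality $(1-x)^m \geq \exp\bigl(-mx/(1-x)\bigr)$ with $x := j/(k-m+1)$ (which is itself $O(1/c)$ under our hypotheses) then gives $\Pr(R > \lfloor \ell/5 \rfloor) \geq e^{-1/4}$, say, for all sufficiently large $c$ and $k$.

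Combining the two bounds yields $\Pr(\ell/5 < R \leq \ell) \geq e^{-1/4} - e^{-1} > 1/13$, as required. The main obstacle is ensuring that the lower bound in the second step holds uniformly across the full range $c \leq \ell \leq k/c$: at one extreme $m \approx k/c$ is large while $\ell/k$ is tiny, and at the other $m \approx c$ is small while $\ell/k \approx 1/c$; in both regimes the discrete product should be within $o(1)$ of the continuous exponential $e^{-1/5}$, and the slack between $0.45$ and $1/13$ provides a comfortable buffer to absorb the error terms introduced by the two inequalities $m \leq k/\ell + 1$ and $k - m + 1 \geq k(1 - 1/c)$.
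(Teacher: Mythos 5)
Your proof is correct, and it follows the same overall strategy as the paper: write $\Pr(\ell/5 < R \le \ell)$ as a difference of two tail probabilities of the minimum of a random subset, upper-bound the tail at $\ell$ by $e^{-1}$, and lower-bound the tail at $\ell/5$ by comparison with an exponential. The one genuine difference is in how the product is expanded. The paper writes $\Pr(R>r)$ as a product of $r$ factors of the form $(k-q-i)/(k-i)$ (indexed over the $r$ smallest values) and lower-bounds it by $(1-2/\ell)^r > e^{-4r/\ell}$, which at $r=\ell/5$ gives only $e^{-4/5}$ and hence the razor-thin margin $e^{-4/5}-e^{-1}\approx 1/12.28 > 1/13$. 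You instead expand over the $m=\lceil k/\ell\rceil$ sampled positions, $\prod_{i=0}^{m-1}(1-j/(k-i))$, and bound each factor by its smallest term; since $j/(k-m+1)=O(1/c)$ uniformly in $\ell$, the inequality $(1-x)^m\ge e^{-mx/(1-x)}$ loses almost nothing and you land at $e^{-1/4}-e^{-1}\approx 0.41$, essentially the true value $e^{-1/5}-e^{-1}$. Your closing worry about uniformity over the range $c\le\ell\le k/c$ is already resolved by your own estimates, which depend only on $c$; no additional argument is needed. The transposed expansion is the better choice here: it buys near-optimal constants and a comfortable buffer over $1/13$, where the paper's bound barely clears it.
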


\begin{proof}[Proof of Lemma~\ref{lemma:reduce-to-balanced-girls}]
If
$|G| = m > n = |B|$,
then we simply ignore any subset of
$m - n$
girls (leaving them unmatched) and run the algorithm promised by the
assumption on the remaining $n$ girls and all boys in $B$.

The more interesting case is when
$|G| = n < m = |B|$.
Let $c$ be the constant from
Observation~\ref{observation:good-event}.
If
$n < c$,
then it suffices to satisfy a single girl which can be fulfilled by applying
the classic secretary algorithm to the instance consisting of an arbitrary
girl
$g \in G$
and all boys in $B$,
thus satisfying $g$ with probability that converges to
$1 / e$
as
$m \rightarrow \infty$;
assume hereafter that
$n \geq c$.
We can further assume that
$n \leq m / c$
as otherwise, we simply ignore an arbitrary subset of
$n - m / c$
girls (leaving them unmatched).

Refer to the first
$\lceil m / n \rceil$
boys as the \emph{filter} boys and leave them unmatched.
Let $b_{f}$ be the most preferred filter boy and define the random set
$X = \{ b \in B \mid b \succ b_{f} \}$.
Observation~\ref{observation:good-event} ensures that the event
$n / 5 \leq |X| < n$
occurs with probability at least
$1 / 13$;
condition hereafter on this event.

Refer to the
$n / 5$
least preferred girls in $G$ as the \emph{target} girls.
We run the algorithm promised by the assumption on the target girls and the
first
$n / 5$
boys to arrive from $X$, matching any remaining boy from $X$ to an arbitrary
non-target girl and ignoring all boys not in $X$ (leaving them unmatched).
The assumption ensures that a
$\rho(n / 5)$
fraction of the target girls will be satisfied.
\end{proof}

\begin{proof}[Proof of Lemma~\ref{lemma:reduce-to-balanced-boys}]
If
$|G| = m > n = |B|$,
then one can simply ignore the
$m - n$
least preferred girls (leaving them unmatched) and run the algorithm promised
by the assumption on the remaining $n$ girls in $G$ and all boys in $B$.

The more interesting case is when
$|G| = n < m = |B|$.
Let $c$ be the constant promised by
Observation~\ref{observation:good-event} and fix
$\ell = 5 n$.
If
$\ell < c$,
then it suffices to satisfy the heaviest boy which can be trivially fulfilled
by matching him to the most preferred girl;
assume hereafter that
$\ell \geq c$.
We can further assume that
$\ell \leq m / c$
as otherwise, we simply ignore the last arriving
$m - \ell < m (1 - 1 / c)$
boys (leaving them unmatched), thus losing an expected weight of
$w(B) / c$;
employing Markov's inequality, we can condition on the lost weight to be
sufficiently close to it.

Our proof requires an assumption on the weights as well:
We assume that
$w(b) \neq w(b')$
for every two boys
$b, b' \in B$;
this is without loss of generality since one can break ties randomly in an
online fashion.

Refer to the first
$\lceil m / \ell \rceil$
boys as the \emph{filter} boys and leave them unmatched.
Let $b_{f}$ be the heaviest filter boy and define the random set
$X = \{ b \in B \mid w(b) > w(b_{f}) \}$.
Observation~\ref{observation:good-event} ensures that the event
$n = \ell / 5 \leq |X| < \ell = 5 n$
occurs with probability at least
$1 / 13$
(recall the assumption that the boys' weights are distinct);
condition hereafter on this event.
This means, in particular, that
$w(X) \geq w(H_{B})$.

Let
$Y \subseteq B$
be the subset consisting of the first $n$ boys to arrive from $X$.
We run the algorithm promised by the assumption on all girls in $G$ and the
boys in $Y$, ignoring all remaining boys (leaving them unmatched).
The assertion follows since the random arrival order ensures that in
expectation,
$w(Y) \geq w(X) / 5$;
employing Markov's inequality, we can condition on $w(Y)$ being sufficiently
close to it.
\end{proof}

\begin{lemma} \label{lemma:girls-equivalent-to-boys}
There exists a conservative algorithm that approximates optimization criterion
\CritG{} within ratio
$\rho = \rho(n)$
in an instance with
$|G| = |B| = n$
if and only if
there exists a conservative algorithm that approximates optimization criterion
\CritB{} within ratio
$\rho = \rho(n)$
in an instance with
$|G| = |B| = n$.
This holds for every arrival order distribution.
\end{lemma}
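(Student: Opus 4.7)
The plan is to exploit a preference-reversal duality between girl-satisfaction and boy-satisfaction. Given an instance with preference $\succ$, write $I^{\text{op}}$ for the instance obtained by reversing $\succ$ on both $G$ and $B$. The first step is to establish a pointwise duality: for every perfect matching $M$ on a balanced instance and every girl $g$, the girl $g$ is satisfied in $(I, M)$ if and only if her partner $M(g)$ is satisfied in $(I^{\text{op}}, M)$. This is a short substitution: by definition, $g$ is unsatisfied in $(I, M)$ iff there exists a boy $b^{\ast}$ with $b^{\ast} \succ M(g)$ and $g \succ M^{-1}(b^{\ast})$; writing $g^{\ast} = M^{-1}(b^{\ast})$ and unfolding the reversed order, this becomes exactly the condition that $M(g)$ is unsatisfied in $(I^{\text{op}}, M)$ witnessed by $g^{\ast}$. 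Summing over matched pairs, for every perfect matching $M$ the number of satisfied girls in $(I, M)$ equals the number of satisfied boys in $(I^{\text{op}}, M)$.

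Granted this duality, the reduction is mechanical. Let $A$ be a conservative algorithm that approximates \CritG{} within ratio $\rho(n)$ on balanced instances under arrival distribution $\Pi$. Define $A'$ to be the algorithm that, on a real input instance $I$, internally runs $A$ on the instance $I^{\text{op}}$ and, whenever $A$ outputs ``match the current boy to the physical girl $g \in G$'', commits to the same physical $g$ in the real matching. The matching $M_{A'}(\pi, I)$ therefore coincides, as a set of edges, with the matching $M_A(\pi, I^{\text{op}})$ that $A$ would produce on $I^{\text{op}}$ under the same arrival $\pi$. Conservativity is a purely combinatorial property of the output matching (it demands maximum cardinality), so it is preserved by the construction; in the balanced case this forces every realized output of $A'$ to be a perfect matching, and the pointwise duality applies at every outcome.

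Putting the pieces together, $A$'s guarantee applied to the instance $I^{\text{op}}$ gives that the expected number of satisfied girls in $(I^{\text{op}}, M_A(\pi, I^{\text{op}}))$ is at least $\rho(n) \cdot n$, and the pointwise duality converts this into the statement that the expected number of satisfied boys in $(I, M_{A'}(\pi, I))$ is at least $\rho(n) \cdot n$, i.e., $A'$'s \CritB{} guarantee under the \emph{same} distribution $\Pi$. The converse implication is symmetric, with the roles of \CritG{} and \CritB{} swapped, because $I \mapsto I^{\text{op}}$ is an involution. The only step I expect to require care is the pointwise duality itself; once the blocking-pair condition is spelled out and composed with the reversal of $\succ$, the reduction goes through uniformly for \emph{any} arrival distribution $\Pi$, because the construction never modifies the arrival order -- only the instance.
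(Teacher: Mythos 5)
Your proof is correct and takes essentially the same approach as the paper's: both construct the preference-reversed (transposed) instance and prove the pointwise duality that an individual is satisfied under a perfect matching $M$ if and only if its partner is satisfied under the same matching in the reversed instance, then note that the reduction is online-implementable, preserves conservativity (hence perfectness of the output matching in the balanced case), and does not touch the arrival distribution. The only cosmetic difference is that the paper states the duality from the boy's side while you state it from the girl's side.
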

\begin{proof}
Consider some instance that consists of girl set $G$ and boy set $B$.
We construct its
\emph{transposed} instance by setting
the girl set
$\bar{G} = \{ \bar{g} \mid g \in G \}$,
boy set
$\bar{B} = \{ \bar{b} \mid b \in B \}$,
and define the preferences over $\bar{G}$ and $\bar{B}$ so that
$\bar{g}_{1} \succ \bar{g}_{2}$
if and only if
$g_{1} \prec g_{2}$
and
$\bar{b}_{1} \succ \bar{b}_{2}$
if and only if
$b_{1} \prec b_{2}$.
Given some perfect matching $M$ between $G$ and $B$, construct its
\emph{transposed} matching by setting
$\bar{M} = \{ (\bar{g}, \bar{b}) \mid (g, b) \in M \}$.
For an individual
$x \in G \cup B$,
let $M(x)$ denote the individual to which $x$ is matched under $M$;
likewise, for an individual
$\bar{x} \in \bar{G} \cup \bar{B}$,
let $\bar{M}(\bar{x})$ denote the individual to which $\bar{x}$ is matched
under $\bar{M}$.

We argue that boy
$b \in B$
is satisfied under $M$ if and only if girl
$\bar{M}(\bar{b}) \in \bar{G}$
is satisfied under $\bar{M}$.
The assertion follows since the transposed instance can be constructed in an
online fashion and since the transposed instance of the transposed instance is
the original instance.
Indeed,
\begin{align*}
& ~
\text{$b$ is satisfied under $M$} \\
\Longleftrightarrow & ~
M(g) \succ b \quad \forall g \succ M(b) \\
\Longleftrightarrow & ~
\bar{M}(\bar{g}) \prec \bar{b} \quad \forall \bar{g} \prec \bar{M}(\bar{b}) \\
\Longleftrightarrow & ~
\text{$\bar{M}(\bar{b})$ is satisfied under $\bar{M}$} \, ,
\end{align*}
where the first and third transitions follow directly from the definition of a
satisfied individual and the second transition follows from the construction
of the transposed instance and matching.
\end{proof}

\section{Random Arrival Order}

\subsection{Maximizing the Number of Satisfied Individuals}
\label{section:individuals}

\begin{theorem} \label{theorem:positive-Cg-Cb}
Optimization criteria \CritG{} and \CritB{} (maximizing the number of
satisfied girls and boys) can be approximated within a (positive) constant
ratio.
\end{theorem}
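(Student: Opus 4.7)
I plan to reduce the problem to the balanced case $|G|=|B|=n$ using Lemma~\ref{lemma:reduce-to-balanced-girls} for \CritG{} and Lemma~\ref{lemma:reduce-to-balanced-boys} (applied with the unit weight function) for \CritB{}, and then design a secretary-style algorithm on balanced instances. In a balanced instance the unique offline stable matching pairs $g_k$ with $b_k$ for every rank $k$, so my goal reduces to approximating this rank-to-rank pairing in an online fashion.

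The algorithm I propose is a sample-and-estimate rule. Let the first $\lceil n/2\rceil$ boys pass unmatched (the \emph{sample phase}). For every subsequent boy $b$ arriving at time $t>n/2$, let $r$ be $b$'s rank among the $t$ boys seen so far and form the unbiased estimator $\hat{k}=\lceil rn/t\rceil$ of $b$'s true rank. If girl $g_{\hat{k}}$ (the $\hat{k}$-th most preferred girl) is still available, match $b$ to her; otherwise leave $b$ unmatched. For \CritB{}, Lemma~\ref{lemma:conservative-Cb-Cp} allows me to append a conservative patching step at the end of the run, which I would argue cannot decrease the number of satisfied boys already produced by the main rule.

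The analysis rests on the observation that, under uniform random arrival, $\hat{k}$ is tightly concentrated around the true rank $k$ of the arriving boy: conditional on $b=b_k$, the relative rank $r$ has a hypergeometric distribution with mean $kt/n$, yielding $\Pr[\hat{k}=k]=\Omega(1)$ whenever $k$ lies in a ``typical'' range. My plan is to restrict attention to $k\in[n/4,n/2]$ and to lower-bound by a positive constant the probability of the joint event that (i) $b_k$ arrives during the matching phase, (ii) $\hat{k}=k$ when $b_k$ arrives, (iii) girl $g_k$ is still available at that instant, and (iv) every previously matched boy $b_{k'}$ with $k'<k$ was matched to a girl $g_{k''}$ with $k''\le k$. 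Conditions (i) and (ii) follow directly from the concentration statement; condition (iii) follows from bounding the expected number of ``collisions'' $\hat{k}_{b'}=k$ generated by other post-sample boys, which is $O(1)$ by the same concentration plus a Markov-type argument.

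The main obstacle will be handling condition~(iv), which couples the satisfaction status of different target ranks through the algorithm's earlier decisions and must be prevented from compounding into a vanishing success probability. My plan for this is to show that the bad event---some $b_{k'}$ with $k'<k$ producing an estimate $\hat{k}'>k$---is a large-deviation event whose probability is $O(1/k)$ per $k'$, so that a union bound over $k'<k$ loses only a constant factor. Given events (i)--(iv), the matched pair $(b_k,g_k)$ is satisfied under both \CritG{} and \CritB{}: every boy preferred to $b_k$ is then paired with a girl preferred to $g_k$ and, symmetrically, every girl preferred to $g_k$ is paired with a boy preferred to $b_k$, precluding any blocking pair containing either. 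A linearity-of-expectation sum over $k\in[n/4,n/2]$ then produces $\Omega(n)$ satisfied girls and $\Omega(n)$ satisfied boys in expectation, completing the proof.
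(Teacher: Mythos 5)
Your reduction to the balanced case via Lemmas~\ref{lemma:reduce-to-balanced-girls}, \ref{lemma:reduce-to-balanced-boys} and \ref{lemma:girls-equivalent-to-boys} is exactly how the paper frames the theorem, but the core algorithm and its analysis do not work. The central claim that $\Pr[\hat{k}=k]=\Omega(1)$ is false: conditional on $b=b_k$ arriving at time $t$, the rank $r$ among the first $t$ boys is hypergeometric with standard deviation $\Theta(\sqrt{n})$ when $k,t=\Theta(n)$, so the event $\lceil rn/t\rceil=k$ (which pins $r$ to a single integer) has probability $\Theta(1/\sqrt{n})$. Even granting every other step, summing over $k\in[n/4,n/2]$ yields only $O(\sqrt{n})$ satisfied individuals, not $\Omega(n)$. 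Your handling of condition~(iv) inherits the same error: for the $\Theta(\sqrt{n})$ ranks $k'$ just below $k$, the overshoot probability $\Pr[\hat{k}'>k]$ is a constant, not $O(1/k)$, so the union bound does not close.

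A second, independent problem is that your algorithm leaves roughly $n/2$ boys (the sample phase) and at least $n/2$ girls (since at most $n/2$ boys are ever matched) unmatched, and unmatched individuals block aggressively under the paper's definition: any unmatched boy stronger than $b_k$ forms a blocking pair with $g_k$, and any unmatched girl stronger than $g_k$ forms a blocking pair with $b_k$. With probability $1-o(1)$ both kinds of blockers exist for every $k$ in your target range, so essentially no one is satisfied. The paper's Lemma~\ref{lemma:boys-lower-bound} is built to avoid precisely these two traps: the sacrificial first batch (the ``red'' boys) is \emph{matched} to the \emph{weakest} girls so it cannot block anyone, subsequent boys are ranked by exact comparison against this red ruler rather than by a noisy cardinal estimate of their global rank, the ``white'' boys are placed order-preservingly with a deliberate downward shift of $r=\Theta(\delta n)$ that absorbs the gaps in the ruler, and the ``yellow'' boys fill every remaining girl (the algorithm is conservative), with satisfaction proved by a combinatorial order-preservation argument rather than by exact rank recovery. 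You would need to import all three of these devices to repair the proposal.
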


Theorem~\ref{theorem:positive-Cg-Cb} is established by combining the following
lemma with Lemmas \ref{lemma:reduce-to-balanced-girls},
\ref{lemma:reduce-to-balanced-boys}, and
\ref{lemma:girls-equivalent-to-boys}.

\begin{lemma}\label{lemma:boys-lower-bound}
For every
$\epsilon>0$,
there exists
$n_0\in\mathbb N$
such that for any
$n\geq n_0$,
\DM{} has a conservative strategy that with probability at least
$1-\epsilon$,
satisfies at least
$(1/5-\epsilon)n$
boys in any instance with
$|G| = |B| = n$.
\end{lemma}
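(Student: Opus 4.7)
The plan is first to reformulate the satisfied-boy condition combinatorially, and then to design and analyze a conservative two-phase sampling-and-matching algorithm.

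\textbf{Combinatorial reformulation.} Writing $f$ for the (random) permutation of $[n]$ such that $b_i$ is matched to $g_{f(i)}$, I would unpack the blocking-pair definition as follows: a blocking pair $(b_i,g_j)$ requires $j<f(i)$ and $f^{-1}(j)>i$, so $b_i$ is satisfied iff for every $j<f(i)$ one has $f^{-1}(j)<i$, which rearranges to ``for every $i'>i$, $f(i')>f(i)$.'' Hence $b_i$ is satisfied iff $f(i)$ is a right-to-left minimum of the sequence $(f(1),\ldots,f(n))$, and our goal becomes: with probability at least $1-\epsilon$, produce a permutation $f$ with at least $(1/5-\epsilon)n$ right-to-left minima.

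\textbf{Algorithm.} I would use a two-phase strategy. In the sampling phase---the first $\alpha n$ arrivals, with $\alpha$ a tuning constant---each boy is matched to the currently lowest-indexed available girl; Lemma~\ref{lemma:conservative-Cb-Cp} guarantees this is compatible with a conservative optimum, and it reserves the top $(1-\alpha)n$ girls for the exploitation phase. In the exploitation phase, for each boy arriving at time $t$ with relative rank $r_t$, the algorithm forms the rank estimator $\hat{\imath}=\lceil r_t(n+1)/(t+1)\rceil$, whose conditional mean given a true rank $i$ is precisely $i$ (with hypergeometric variance $\Theta(n)$), and assigns the boy to the highest-indexed still-available girl in a window of width $\Delta=\Theta(\sqrt{n\log n})$ around $\hat{\imath}$, dropping to a trash girl when the window is exhausted. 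The within-window tiebreak is engineered to preserve monotonicity: earlier-arriving boys with smaller $\hat{\imath}$ claim smaller girl-indices, so that the output sequence $(f(1),\ldots,f(n))$ is ``nearly monotone'' with per-coordinate noise of order $\Delta$.

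\textbf{Analysis and main obstacle.} The analysis combines two ingredients: (i) a hypergeometric tail bound showing that $|\hat{\imath}-i|\le\Delta$ simultaneously for every $i$ and every $t$ in the exploitation phase with failure probability at most $\epsilon/2$ (by a union bound over at most $n^2$ events); and (ii) a counting step showing that in a ``perturbed identity'' $f(i)=i+X_i$ with $|X_i|\le\Delta$, a positive constant fraction of indices are right-to-left minima, via the observation that $i$ is a right-to-left minimum whenever $X_i\le X_{i+k}+k$ for every $k\ge 1$, an event that holds with probability bounded away from $0$ when the noises have zero mean. A concentration argument (Azuma or McDiarmid, using that swapping two elements of $\pi$ changes $R_f$ by at most $O(1)$) then converts this expectation into a high-probability bound. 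The main obstacle I anticipate is pinning down the constant $1/5$: merely showing $\Omega(n)$ right-to-left minima is straightforward once the noise level is controlled, but squeezing out the precise $1/5-\epsilon$ requires carefully co-tuning $\alpha$ (so that enough boys are processed with sharp estimates) against the sampling-phase waste and the window-collision rate. I suspect the clean constant arises via a filter-type argument mirroring Observation~\ref{observation:good-event}, in which the probability that the best of the first $n/\ell$ boys has rank between $\ell/5$ and $\ell$ is what ultimately produces the $1/5$ appearing in the lemma.
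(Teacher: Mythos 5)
Your combinatorial reformulation is correct: with a perfect matching $f$ (boys and girls both indexed by preference), boy $b_i$ is satisfied iff $f(i)$ is a right-to-left minimum of $(f(1),\dots,f(n))$. The fatal problem is in your counting step (ii), which you dismiss as "straightforward" but which is exactly where the approach breaks. For a perturbed identity $f(i)=i+X_i$ with roughly independent, symmetric, mean-zero noise of magnitude $\Delta$, the event $\{X_i\le X_{i+k}+k\ \forall k\ge 1\}$ is \emph{not} bounded away from $0$: conditioning on a typical value $X_i=y\approx 0$, it requires $X_{i+k}\ge y-k$ simultaneously for all $k\lesssim\Delta$, and a direct computation (e.g.\ for uniform noise on $[-\Delta,\Delta]$ the conditional probability is $\prod_{k=1}^{\Delta}\tfrac{\Delta+k}{2\Delta}\approx(2/e)^{\Delta}$) shows it is exponentially small in $\Delta$. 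Only indices whose noise lands within $O(1)$ of $-\Delta$ have constant probability of being a right-to-left minimum, so the expected count is $\Theta(n/\Delta)$. Since your rank estimator has intrinsic hypergeometric error $\Theta(\sqrt{n})$ (forcing $\Delta=\Theta(\sqrt{n\log n})$ for your union bound), this yields only $O(\sqrt{n/\log n})$ satisfied boys, nowhere near $(1/5-\epsilon)n$. Your difficulty is not "pinning down the constant $1/5$"; it is that two-sided estimation noise of any polynomial magnitude destroys a linear count of right-to-left minima.

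The paper's construction is designed precisely to avoid this. It samples a linear-size set of "red" boys, matches them to the worst girls, and uses the \emph{exact} rank of each later boy relative to the red sample, offset downward by a slack $r=\Theta(\delta n)$: white boys of rank $i$ go to $g_{i-r}$ when possible, and yellow boys go to the best available $g_j$ subject to $j\ge rank-r$. The point is that the resulting error is \emph{one-sided and deterministic}: barring low-probability "catastrophes," every girl $g_j$ ends up matched to a boy of rank at most $j+r$, so a white boy placed at $g_{rank-r}$ provably has no blocking partner above him --- no concentration of a symmetric estimator is needed for satisfaction, only for counting how many white boys get placed at their designated slot (which is where the Bernoulli/geometric gap analysis and the constant $1/5$ actually come from, not from Observation~\ref{observation:good-event}, which serves the unbalanced-to-balanced reductions). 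As a secondary issue, your final Azuma/McDiarmid step also needs justification: swapping two arrivals can cascade through girl availability and change the output matching in many coordinates, so the $O(1)$ bounded-differences claim is not obvious. But the approach would need to be repaired at step (ii) before that matters.
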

\begin{proof}
Fix
$0<\gamma<1/5$
and
$n\in\mathbb N$.
We describe a probabilistic algorithm for \DM{}.
We then claim that the probability that there are at least
$\gamma n$
satisfied boys converges to $1$ as $n$ goes to infinity.
Obviously, there is a deterministic algorithm (in the support of our
algorithm) that ensures at least the same guarantee.
Figure~\ref{figure:boys-algorithm} illustrates a typical output of the
algorithm.

Set $\delta=1/5-\gamma$ and $a=
2\gamma + \delta$.
Let $X=(R,W,Y)$ be a
multinomial random variable with
parameters $(a,a,1-2a;n)$.
Namely, $X$ can be
realized as follows: take $x_1,\ldots,x_n$ i.i.d.\ random variables taking
values in $\{\text{red},\text{white},\text{yellow}\}$ with probabilities
$\Pr(x_1=\text{red})=\Pr(x_1=\text{white})=a$ and
$\Pr(x_1=\text{yellow})=1-2a$; let $R=|\{i:x_i=\text{red}\}|$,
$W=|\{i:x_i=\text{white}\}|$, and
$Y=|\{i:x_i=\text{yellow}\}|$.
Each
realization of $X$ prescribes a deterministic algorithm parameterized by
$(R,W,Y)$.

We call the first $R$ boys to arrive ``red'' boys and index them in
decreasing order of preference $b_1\succ b_2 \succ \cdots \succ
b_{R}$.
The girls are indexed in decreasing order of
preference,
$g_1\succ g_2\cdots \succ g_n$.
The red boys are matched with the least preferred girls $g_n,\ldots,g_{n-R+1}$ in an arbitrary order (say, in order of their arrival).

Each one of the remaining boys $x$ is associated a number $rank(x)\in\{0,\ldots,R\}$ according to how he compares with the red boys,
\[
rank(x)=
\begin{cases}
0&\text{if $x\succ b_1$,}\\
i&\text{if $b_i\succ x\succ b_{i+1}$,}\\
R&\text{if $b_R \succ x$.}
\end{cases}
\]

We call the boys arriving from time $R+1$ until $R+W$ ``white.'' Let $r=\lceil
\tfrac 1 4 \delta n\rceil$.
We try to match as many white boys as
possible with the $R-r$ most
preferred girls while preserving the preference order.
In order to be able to do so we need
to assume that the $R-r$ most preferred girls are unmatched yet.
Therefore, if $2R-r>n$ the algorithm reports Catastrophe of Type I and halts.

When a white boy $x$ arrives we
match him either with
$g_{rank(x)-r}$ if $rank(x)> r$ and
$g_{rank(x)-r}$ is unmatched yet, or
with the least preferred unmatched girl $g_i$.
In the latter case, if $i \leq R-r$ the algorithm reports Catastrophe of Type
II and halts.

We call the boys arriving after time
$R+W$ ``yellow'' boys.
When a yellow boy $x$ arrives we
match him with the most preferred
unmatched girl $g_i$ subject to $i\geq rank(x)-r$.
I.e.,
$i=\min\{j\in [n]\setminus
[rank(x)-r-1]: \text{$g_j$ is
unmatched yet}\}$.
If that set is empty, the algorithm
reports Catastrophe of Type III and halts.

We now turn to analyze the number of
satisfied boys.
The idea is to estimate the number of white boys $x$ who are matched according to their rank to $g_{rank(x)-r}$, and to show that these boys are all satisfied.
Let $R'=\min\{R,\left\lceil(a-\tfrac{1}{4}\delta)n\right\rceil\}$ and
\[
R''=\left\vert[R']\setminus\{rank(x):\text{$x$ is a white boy}\}\right\vert.
\]
Intuitively, $R''$ approximates the
size of the complement of the image
of the mapping $x\mapsto rank(x)$, where $x$ ranges over the white boys.
Clearly, $R''$ bounds from above the number of unmatched girls among the $R-r$ most preferred girls at time $R+W$.

Consider the following (bad) events:
\begin{align*}
E_1&=\{\text{Catastrophe of Type I reported}\},\\
E_2&=\{\text{Catastrophe of Type II reported}\},\\
E_3&=\{\text{Catastrophe of Type III reported}\},\\
E_4&=\{R<(a-\tfrac{1}{4}\delta)n\},\\
E_5&=\{R''> \tfrac 1 2 an\}.
\end{align*}

We will soon show that for all $i=1,...,5$, $\Pr(E_i)\to 0$, as $n\to\infty$.
We now show that given that none of
the five bad events occurred, the
number of satisfied boys is at least
$\gamma n$.
Given that $E_1$ and $E_2$ do not
occur, the boys matched with girls
in $\{g_1,\ldots,g_{R-r}\}$ until
time $R+W$ are exactly all the white
boys $x$ whose match is
$g_{rank(x)-r}$.
Given that $E_3$ does not occur, these boys end up being satisfied (when the algorithm terminates). Indeed, if a white boy of rank $i$ is matched with $g_{i-r}$ and $j<i-r$, then $g_j$ is matched with either a white or a yellow boy. In the former case, $g_j$ is matched with a (white) boy of rank $j+r<i$. In the latter case, $g_j$ is matched with a yellow boy whose rank is at most $j+r$.

We show that the number of these boys
\[
W'=\left\vert\left([R]\setminus [r]\right)\cap \{rank(x):\text{$x$ is a white boy}\}\right\vert
\]
is at least $\gamma n$. Indeed, given that none of events $E_4$ or $E_5$ occurred, since $W'\geq R'-r-R''$,
\[
\frac {1}{n}W'\geq (a-\tfrac{1}{4}\delta) - \tfrac 1 4 \delta - \tfrac 1 2 a=\gamma.
\]

It remains to verify that for all $i=1,...,5$, $\Pr(E_i)\to 0$, as $n\to\infty$. By the weak law of large numbers, $\frac 1 n R\to a$ in probability, readily implying that $\Pr(E_1)$ and $\Pr(E_4)$ vanish as $n$ grows.

The following observation will be useful: let $x_1$ be the color of the most preferred boy, $x_2$ the color of the second most preferred boy, and so on until $x_n$. The random variables $x_1,\ldots,x_n$ are i.i.d.\ with $\Pr(x_i=\text{red})=\Pr(x_i=\text{white})=a$, and $\Pr(x_i=\text{yellow})=1-2a$. For the sake of argumentation we extend $x_1,x_2,\ldots$ to be an infinite sequence of i.i.d.\ random variables.

Let $t_1<t_2<\cdots$ be the occurrences of ``red,'' namely, $\{t_i\}_{i\in\mathbb N}=\{j:x_{j}=\text{red}\}$. Let $I_i$ be the indicator of the event $\{x_j\neq \text{white}, \forall\, t_i<j<t_{i+1}\}$. Clearly,
\[
R''=\sum_{i=1}^{R'}I_i\leq \sum_{i\leq (a-\tfrac{1}{4}\delta)n}I_i.
\]
Since $I_1,I_2,\ldots$ are i.i.d.\ $Bernoulli(\tfrac 1 2,\tfrac 1 2)$, by the weak law of large numbers, \[
\lim_{n\to\infty}\Pr(\sum_{i\leq (a-\tfrac{1}{4}\delta)n}I_i>\tfrac 1 2 an)=0.
\]
Therefore, $\Pr(E_5)$ vanishes as $n$ grows.

We show that $\Pr(E_2)$ vanishes by showing that $\Pr(E_2\setminus E_5)$ vanishes. Suppose we modified the algorithm so that when a Catastrophe of Type II occurs the algorithm skips the current boy (leaving him unmatched) and continues to the next boy. Consider the situation at time $R+W$ in the event $E_2\setminus E_5$. The girls $\{g_i\}_{j=R-r+1}^n$ are all matched. Among the other girls there are at most $R''+|R-(a- \tfrac 1 4\delta)n|$ unmatched girls. Therefore,
\begin{align*}
Y=1-R-W&<\text{``\#unmatched boys''}\\
&=\text{``\#unmatched girls''}\\
&\leq \tfrac 1 2 an +|R-(a- \tfrac 1 4\delta)n|,
\end{align*}
where the last inequality holds since we are in the case that event $E_5$ has not occurred.
By the weak law of large numbers, $\tfrac 1 n R\to a$ and $\tfrac 1 n Y\to 1-2a$ in probability. Since $1-2a=\tfrac 1 2 a + 2\tfrac 1 2 \delta> \tfrac 1 2 a + \tfrac 1 4 \delta$, it follows that $\Pr(E_2\setminus E_5)\to 0$, as $n\to \infty$.

It remains to show that $\Pr(E_3)$
vanishes.
To this end define
$Y_i=|\{t_i<j<t_{i+1}:x_j=\text{yellow}\}|$,
i.e., the number of yellow boys
between the $i$th and $(i+1)$th red boys.
The distribution of $Y_i+1$ is
geometric with success probability $a/(1-a)$, namely,
$\Pr(Y_i=k)=\left(\frac
{1-2a}{1-a}\right)^k \frac {a}{1-a}$
($k=0,1,\ldots$); hence
$\mathbb E[Y_i]=(1-a)/a -1 > \tfrac
1 2$.
Consider the i.i.d.\ random
variables $Z_i=Y_i-I_i$.
By the strong law of large numbers
$\tfrac 1 n \sum_{i=1}^n Z_i\to E[Z_1]>0$, almost surely. It follows that
\begin{equation*}
\lim_{n\to\infty}\Pr(\exists\, k\geq \tfrac 1 4 \delta n\text{ s.t. }\sum_{i=1}^k Z_i\leq 0) =0.
\end{equation*}

We show that $E_3 \subset \{\exists\, k\geq \tfrac 1 4 \delta n\text{ s.t. }\sum_{i=1}^k Z_i\leq 0\}$. Suppose that the algorithm reports Catastrophe of Type III upon the arrival of a yellow boy $x$ at time $t$. Let $k=\max\{i+r:\text{$g_i$ is unmatched at time $t$}\}$. Since it is a Catastrophe of Type III, we have $k < rank(x)$. The girls $g_i$, $i={k-r+1},\ldots,n$, are all matched with boys who are either red or white, or have rank at least $k+1$. To see this, consider any such $g_i$ who is matched with some yellow boy $y$. If $rank(y)<i+r$, then at the time $y$ arrived, all the girls $g_j$, $j={rank(y)-r,\ldots,i-1}$, were matched; therefore, at the present time, all the girls  $g_j$, $j= rank(y)-r,\ldots,n$, are matched; therefore $k< rank(y)$. Since the number of unmatched boys is equal to the number of unmatched girls at the time just before the catastrophe, we have
\begin{align*}
\sum_{i=1}^{k} I_i &\geq \sum_{i=r}^{k} I_i \\
&= \text{``\#girls who are either unmatched or matched to yellow boys of rank at most $k$''}\\
&= \text{``\#yellow boys who are either unmatched or have rank at most $k$''}\\
&> \text{``\#yellow boys who have rank at most $k$''}\\
&\geq \sum_{i=1}^{k} Y_i.
\end{align*}
It follows that $\sum_{i=1}^k Z_i <0$, and the proof is concluded since $k>r\geq \tfrac 1 4 \delta n$.
\end{proof}

\begin{figure}
\begin{center}
\includegraphics[width=0.5\textwidth]{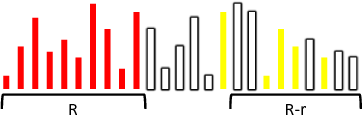}
\caption{\label{figure:boys-algorithm}
A typical matching in the proof of Lemma~\ref{lemma:boys-lower-bound}.
The vertical lines represent the boys. The first boy on the right is matched
with the most preferred girl $g_1$, the second boy with $g_2$, the third with
$g_3$, and so on. The lengths of the lines represents the quality of the
corresponding boys. The shorter the line the more preferred the boy is. A line
corresponds to a satisfied boy if there are no longer lines to its right. The
white boys on the $R-r$ right segment are all satisfied. There are roughly
$1/5n$ such boys.}
\end{center}
\end{figure}

\subsection{Maximizing the Number of Satisfied Matched Pairs}
\label{section:matched-pairs}

\begin{theorem} \label{theorem:negative-Cp}
Optimization criterion \CritP{} cannot be approximated within ratio better than
$O (1 / \sqrt{n})$
even in balanced instances with
$|G| = |B| = n$.
\end{theorem}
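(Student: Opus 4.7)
The plan is to combine a structural observation about satisfied pairs with a Bayesian identifiability bound on the algorithm's ability to guess a boy's global rank from relative ranks. First I would prove that any satisfied matched pair $(g_i,b_j)$ must satisfy $i=j$: satisfaction of $g_i$ forbids any blocking pair of $g_i$ with a boy strictly better than $b_j$, forcing each of the $j-1$ boys above $b_j$ to be matched to a girl strictly better than $g_i$; this gives an injection $\{b_1,\dots,b_{j-1}\}\hookrightarrow\{g_1,\dots,g_{i-1}\}$, hence $j\le i$, and symmetrically $i\le j$. Consequently the number of satisfied matched pairs is at most $N:=|\{r:b_M(g_r)=b_r\}|$, the number of ``diagonal'' matches, and it suffices to prove $\mathbb{E}[N]=O(\sqrt n)$.

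Under the uniform arrival order, the relative ranks $\sigma_1,\dots,\sigma_t$ observed by \DM{} through time $t$ determine the relative order of $\pi(1),\dots,\pi(t)$ while leaving the unordered set $\{\pi(1),\dots,\pi(t)\}$ uniformly distributed over size-$t$ subsets of $[n]$ (and independent of the relative order). Conditional on any history, the global rank of $\pi(t)$ is therefore distributed as the $\sigma_t$-th order statistic of a uniform size-$t$ subset, so
\[
\Pr\bigl(\pi(t)=b_r\,\big|\,\text{history}\bigr)=\binom{r-1}{\sigma_t-1}\binom{n-r}{t-\sigma_t}\Big/\binom{n}{t}.
\]
Whatever girl $g_{R(t)}$ \DM{} assigns to $\pi(t)$ (the choice may depend on the history and on internal coins independent of $\pi$), the conditional probability that this match is diagonal is at most $M(\sigma_t,t):=\max_r\binom{r-1}{\sigma_t-1}\binom{n-r}{t-\sigma_t}/\binom{n}{t}$.

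Summing over $t$ and averaging over $\sigma_t$, which is uniform on $[t]$, yields
\[
\mathbb{E}[N]\le\sum_{t=1}^n\frac{1}{t}\sum_{s=1}^tM(s,t).
\]
A Stirling estimate at the hypergeometric mode $r^{\ast}\approx sn/t$ gives a pointwise bound of the form $M(s,t)=O\bigl(\tfrac{t}{n}\sqrt{t/(s(t-s+1))}\bigr)$ (as a sanity check one computes directly $M(1,t)=M(t,t)=t/n$). Since $\sum_{s=1}^t 1/\sqrt{s(t-s+1)}=O(1)$, this yields $\sum_sM(s,t)=O(t^{3/2}/n)$, hence $\mathbb{E}[M(\sigma_t,t)]=O(\sqrt t/n)$, and finally $\mathbb{E}[N]\le\sum_{t=1}^n O(\sqrt t/n)=O(\sqrt n)$. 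Dividing by the stable-matching benchmark $n$ gives the claimed $O(1/\sqrt n)$ approximation ratio. The main obstacle I anticipate is establishing the pointwise bound on $M(s,t)$ uniformly over all $(s,t,n)$ with $1\le s\le t\le n$: the Stirling approximation is cleanest for interior $s$ bounded away from $1$ and $t$, and the boundary regimes ($s\in\{1,t\}$ and small $t$) must be handled by a direct combinatorial calculation. Once that lemma is in place, the remaining summation and the reduction of satisfied pairs to diagonal matches are routine.
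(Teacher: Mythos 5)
Your route is genuinely different from the paper's and, with one repair, it works. The paper also begins from the structural fact that a satisfied pair forces a diagonal match $m(i)=i$, but it then bounds the number of diagonal matches by relaxing to a two-stage auxiliary game (a random half of the boys arrives en masse, then the rest), showing that ``simple'' strategies --- fix a set $A$ of girls for the first batch and match each batch in order of preference --- weakly dominate, and reducing everything to the single hypergeometric event $|R\cap[i-1]|=|A\cap[i-1]|$, $|R\setminus[i]|=|A\setminus[i]|$ for the random red set $R$. You instead run a per-arrival Bayesian argument: conditioned on the observed relative ranks, the global rank of $\pi(t)$ is the $\sigma_t$-th order statistic of a uniform $t$-subset of $[n]$, so no algorithm can hit the diagonal at step $t$ with probability exceeding the mode of that distribution. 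This buys you something the paper's proof does not: it applies verbatim to arbitrary randomized algorithms, with no appeal to the reduction to conservative algorithms (Lemma~\ref{lemma:conservative-Cb-Cp}) and no domination argument. The paper's approach, in exchange, yields the sharper explicit constant $\sqrt{n\pi/2}$.

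There is, however, a concrete error in your key estimate: the claimed pointwise bound $M(s,t)=O\bigl(\tfrac{t}{n}\sqrt{t/(s(t-s+1))}\bigr)$ is false when $t$ is close to $n$. At $t=n$ the rank of the last boy is fully determined, so $M(s,n)=1$ for every $s$, whereas your bound would give $O(1/\sqrt{n})$ for $s\approx n/2$. The correct order of the mode is the reciprocal of the standard deviation of the $s$-th order statistic, which carries an extra factor of $\sqrt{(n-t+1)/n}$ that you dropped; the right bound is $M(s,t)=O\bigl(t^{3/2}\big/\sqrt{s(t-s+1)\,n(n-t+1)}\bigr)$. Fortunately this does not sink the argument: the same summation then gives $\mathbb{E}[M(\sigma_t,t)]=O\bigl(\sqrt{t/(n(n-t+1))}\bigr)$ and $\sum_{t=1}^{n}\sqrt{t/(n(n-t+1))}\le\sum_{u=1}^{n}u^{-1/2}=O(\sqrt n)$, so $\mathbb{E}[N]=O(\sqrt n)$ survives. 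You should correct the stated lemma, add $t$ near $n$ to the list of boundary regimes requiring separate treatment, and (to justify that the mode is $O(1/\sigma)$) note that the pmf $r\mapsto\binom{r-1}{s-1}\binom{n-r}{t-s}/\binom{n}{t}$ is log-concave in $r$, or simply carry out the Stirling computation at $r^{\ast}\approx s(n+1)/(t+1)$ with the full variance in hand.
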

\begin{proof}
We establish the assertion for conservative algorithms;
the proof for general algorithms follows by
Lemma~\ref{lemma:conservative-Cb-Cp}.
Consider a two stage auxiliary game in which \DM{} is granted more power than
in the actual game.
We assume, for simplicity, that $n$ is even.
Let $R$ be a random set of $n/2$ boys.
Let us call the boys in $R$ ``red'' and the remaining boys ``white''.
In the first stage the red boys arrive (along with their preference order),
all at once, and \DM{} has to match them with $n/2$ girls.
In the second stage, the white boys arrive (along with their preference
order), at all once, and \DM{} has to match them with the remaining $n/2$
girls.
The objective is to maximize the expected number of satisfied pairs.

Denote the value of the auxiliary game $a(n)$. Since any strategy of the original game can be employed in the auxiliary game, the value of the original game is bounded from above by $a(n)$. We show that $a(n)\leq \sqrt{n\pi/2}+o(\sqrt n)$.

We restrict attention to a subset of the strategies of the auxiliary game. A \emph{simple} strategy in the auxiliary game is a strategy of the following form: (i) choose a set of $n/2$ girls $A$; (ii) match the red boys with $A$ in order of preference; (iii) match the white boys with the remaining girls in order of preference.

We show that any strategy of the auxiliary game is weakly dominated by a simple strategy. Take any pair of boys $b \succ b'$ and any pair of girls $g\succ g'$. Suppose there is a positive probability to the event that $b$ and $b'$ have the same color, and \DM{} matches $b$ with $g'$ and $b'$ with $g$. Modify \DM{}'s strategy, such that in the above event, \DM{} matches $b$ with $g$ and $b'$ with $g'$. By applying this modification the number of satisfied pairs cannot decrease. Indeed, any matched pair that involves a girl who is between $g$ and $g'$ in order of preference is already unsatisfied before the modification, and any other matched pair is unaffected by the modification. Repeatedly applying this modification, for any $b$, $b'$, $g$, and $g'$, yields a weakly dominating simple strategy.

Order the boys and the girls in order of preference $b_1\succ b_2\cdots\succ b_n$ and $g_1\succ g_2\cdots\succ g_n$. Since there is no advantage in randomizing, \DM{} has an optimal simple strategy in which she chooses a fixed $A\subset [n]$, matches the red boys with the $A$-indexed girls and the white boys with the remaining girls.

We estimate the number of satisfied pairs. Let $m\colon [n]\to [n]$, a matching from boys to girls, be the output of \DM{}'s strategy. With tolerable abuse of notation, we also use $m(S)$ to denote the set of girls matched to boys in $S$. By definition, a pair $(g_i,b_j)$ ($m(j)=i$) is satisfied if and only if
 \begin{align*}
m([j-1])&\subseteq [i-1], &&\text{(better boys mate better girls)}\\
m^{-1}([i-1])&\subseteq [j-1].&&\text{(better girls mate better boys)}
\end{align*}
Since $m$ is injective, the above implies that
\begin{align*}
m(i)&=i,\\
m([i-1])&=[i-1],\\
m([n]\setminus[i])&=[n]\setminus[i].\\
\end{align*}
Let $R\subset [n]$ be the indexes of the red boys. The set of girls who are matched with $R$ is predetermined $m(R)=A$. Therefore, in order for a girl $g_i$ to take part in a satisfied pair it is necessary (and sufficient) that the following event $E_i$ occurs
\begin{align*}
|R\cap [i-1]|&=|A\cap [i-1]|,\\
|R\setminus[i]|&=|A\setminus[i]|.
\end{align*}

By counting the values of $R$ that result in $E_i$,
\[
\Pr(E_i)=\frac{\binom{i-1}{|A\cap[i-1]|}\binom{n-i}{|A\setminus[i]|}}{\binom{n}{n/2}}\leq \frac{\binom{i-1}{\lceil (i-1)/2\rceil}\binom{n-i}{\lceil (n-i)/2\rceil}}{\binom{n}{n/2}}.
\]
Thus,
\[
a(n)\leq \sum_{i=1}^n \frac{\binom{i-1}{\lceil (i-1)/2\rceil}\binom{n-i}{\lceil (n-i)/2\rceil}}{\binom{n}{n/2}}.
\]
By Stirling's approximation,
\[
a(n) \lesssim \sum_{i=2}^{n-1} \frac{1}{\sqrt{2\pi}}\frac{\sqrt n}{\sqrt{(i-1)(n-i)}}\lesssim \sqrt n\int_0^1\frac{\mathrm{d}x}{\sqrt{2\pi x(1-x)}}=\sqrt{n\pi/2}.
\]
\end{proof}

On the positive side, using similar ideas to the ones applied in the original
secretary problem, one can guarantee an expected number of satisfied pairs of
$\frac 2 e -\epsilon$
(the proof of the following observation is deferred to
Appendix~\ref{appendix:missing-proofs}).

\begin{observation}\label{observation:positive-Cp}
For every
$\epsilon>0$,
there is
$n_0\in \mathbb N$
such that for every
$n\geq n_0$,
\DM{} can guarantee an expected number of satisfied pairs of
$\frac 2 e -\epsilon$
in any balanced instance with
$|G| = |B| = n$.
\end{observation}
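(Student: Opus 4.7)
The plan is to build a single conservative strategy that simultaneously simulates two copies of the classical secretary algorithm: one attempting to match the globally best boy $b_1$ to the top girl $g_1$, and a symmetric one attempting to match the globally worst boy $b_n$ to the bottom girl $g_n$. Linearity of expectation will then give at least
\[
\Pr[m(1)=1] + \Pr[m(n)=n]
\]
satisfied pairs in expectation (see below), so it suffices to push each of these two probabilities to $1/e - o(1)$.

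Concretely, with $k = \lfloor n/e \rfloor$, the strategy operates in two phases. The \emph{observation} phase consists of the first $k$ arrivals and matches the $t$-th boy ($t \leq k$) to girl $g_{t+1}$, reserving $g_1$ and $g_n$ while exhausting $g_2,\dots,g_{k+1}$. The \emph{selection} phase processes each subsequent boy $b$ using the rules: (i) if $b$ is best-so-far and $g_1$ is unmatched, assign $b \mapsto g_1$; (ii) else if $b$ is worst-so-far and $g_n$ is unmatched, assign $b \mapsto g_n$; (iii) otherwise assign $b$ to the highest-ranked unmatched ``middle'' girl in $\{g_{k+2},\dots,g_{n-1}\}$, falling back to $g_1$ or $g_n$ only if the middle pool is empty. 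A direct count shows the algorithm is conservative.

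The key structural observation, following from the fact that $g_1, b_1$ are maximal and $g_n, b_n$ are minimal in the underlying preferences, is that in any perfect matching none of these four agents can participate in a blocking pair. Hence $(g_1,b_1)$ is satisfied precisely when $m(1)=1$ and $(g_n,b_n)$ is satisfied precisely when $m(n)=n$, justifying the displayed inequality. For the first probability, $b_1$ is trivially best-so-far upon arrival, so he obtains $g_1$ iff (a) he arrives during the selection phase and (b) $g_1$ has not been taken earlier by another best-so-far arrival. A direct classical-secretary calculation yields
\[
\Pr[m(1)=1] \;\geq\; \sum_{t=k+1}^n \frac{1}{n}\cdot\frac{k}{t-1} - o(1) \;\longrightarrow\; \frac{1}{e}.
\]
The symmetric argument, with ``best'' and ``worst'' swapped and $g_1$ replaced by $g_n$, handles $\Pr[m(n)=n]$; the two analyses do not interfere because, for $n \geq 2$, no boy is simultaneously best-so-far and worst-so-far.

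The main obstacle is controlling rule (iii), which could in principle match $g_1$ or $g_n$ to a non-extreme boy and preempt $b_1$ or $b_n$. An elementary counting argument shows that the middle pool can be exhausted only after at least $n-k-2$ selection-phase arrivals have been absorbed by it, which forces all those arrivals to be neither best- nor worst-so-far and consequently places $b_1$ (resp.\ $b_n$) in one of the last two positions---an event of probability $O(1/n)$. This is absorbed into the $o(1)$ term, giving $\Pr[m(1)=1] + \Pr[m(n)=n] \geq 2/e - o(1) \geq 2/e - \epsilon$ for all $n \geq n_0$, as required.
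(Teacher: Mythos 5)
Your proposal is correct and follows essentially the same route as the paper: reserve the most and least preferred girls, run the classical secretary stopping rule twice (once for best-so-far targeting $g_1$, once for worst-so-far targeting $g_n$), and apply linearity of expectation to the two extreme pairs. One small wording slip --- $g_1$ and $b_1$ \emph{can} participate in a blocking pair (namely with each other) when not matched together --- but the direction you actually use, that $m(1)=1$ makes $(g_1,b_1)$ satisfied and $m(n)=n$ makes $(g_n,b_n)$ satisfied, is exactly what the bound needs, and your extra care with the fallback rule is a detail the paper's own proof leaves implicit.
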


\section{Adversarial Arrival Order}
\label{section:adversarial-arrival-order}
The instances considered in this section are balanced
($|G| = |B| = n$)
with an adversarial arrival order.
In every matching there is at least one satisfied girl (resp., boy);
indeed, the girl (resp., boy) that is matched to the most preferred boy
(resp., girl) is clearly satisfied.
The first observation of this section states that a deterministic \DM{} cannot
do any better than satisfying this single individual.

\begin{observation}
There is no deterministic \DM{} that satisfies more than a single girl in
balanced instances with an adversarial arrival order.
\end{observation}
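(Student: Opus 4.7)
My plan is to construct an adaptive adversary that, against any deterministic \DM{}, reveals relative ranks online and forces the final matching to contain at most one satisfied girl. Because \DM{} is deterministic, the adversary knows \DM{}'s entire strategy and can tailor its rank choices accordingly, in effect selecting a rank sequence $(r_1,\dots,r_n)$ tuned to \DM{}.

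The starting point is a sabotage observation: if in the final matching a girl $g_i$ is matched to a boy of global rank $r>i$, then $g_i$ is necessarily unsatisfied, because the $r-1\geq i$ boys strictly preferred to her partner cannot all be matched to the $i-1$ girls strictly preferred to $g_i$, so at least one of those top boys is either unmatched or matched to a girl no better than $g_i$, creating a blocking pair with $g_i$. Building on this, the adversary's default rule is to insert every new arrival at relative rank~$1$, so that under arrival order $b_1,\dots,b_n$ the inferred preference becomes $b_n\succ b_{n-1}\succ\cdots\succ b_1$ and $b_t$'s global rank equals $n-t+1$. The sabotage observation then automatically kills any matched pair $(g_{h(t)},b_t)$ with $h(t)<n-t+1$, so only the ``diagonal'' pairs with $h(t)\geq n-t+1$ remain potential satisfactions. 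I would then verify that among the diagonal candidates, a pair at time $t$ is actually satisfied only when every later time is also a matching time and the matching indices $h(t)>h(t+1)>\cdots>h(n)$ form a strictly decreasing suffix, so either \DM{} already produces at most one satisfied girl on the default input, or \DM{} has locked in the reverse-stable pattern $h(t)=n-t+1$ on some suffix.

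The third step handles the remaining case by an iterative deviation: the adversary replaces $r_{t^{\star}}=1$ by $r_{t^{\star}}=2$ at a carefully chosen time $t^{\star}$ in the threatened suffix, inserting $b_{t^{\star}}$ just below $b_{t^{\star}-1}$ in the preference order. This swap raises $b_{t^{\star}}$'s global rank by exactly one position, pushing the pair $(g_{h(t^{\star})},b_{t^{\star}})$ across the sabotage threshold while leaving the other boys' relative order intact, and since \DM{} is deterministic the perturbed input $(1,\dots,1,2,1,\dots)$ steers \DM{}'s subsequent decisions onto a different branch of its decision tree. I would then recompute \DM{}'s matching on this branch and iterate the argument. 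The main obstacle is verifying that iterated perturbations preserve the sabotage of all previously matched girls without inadvertently creating new satisfied pairs elsewhere. I would handle this by maintaining the explicit invariant that at every step the adversary designates a single ``candidate'' matched girl $g^{\star}$ whose satisfaction is not yet ruled out by past commitments, while all other matched girls have been sabotaged; each rank-$2$ insertion shifts only one boy's global rank and hence is checked to preserve the invariant. Upon termination, the invariant yields the claimed bound of at most one satisfied girl.
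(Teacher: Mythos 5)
There is a genuine gap here. Your argument is a plan rather than a proof: the entire burden of the claim rests on the third step (the iterated rank-$2$ perturbations together with the invariant that only one ``candidate'' girl survives), and that step is only described, not carried out. You yourself flag the main obstacle --- that a perturbation must sabotage the threatened pair without creating new satisfied pairs elsewhere --- but the resolution you offer is circular: the invariant you propose to maintain is essentially the statement to be proved. Concretely, once you change $r_{t^{\star}}$ from $1$ to $2$, \DM{} moves to a different branch of its decision tree, so the adversary's \emph{subsequent} insertions (and hence all later global ranks) may also have to change; the claim that the swap ``leaves the other boys' relative order intact'' then no longer pins down the final matching, and you give no termination or correctness argument for the iteration. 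There is also a concrete error in the intermediate dichotomy: from ``$h(t)>h(t+1)>\cdots>h(n)$ is strictly decreasing'' you conclude that \DM{} must have locked in $h(t)=n-t+1$ on a suffix, but this is false --- e.g., matching $b_1$ to $g_1$ and $b_t$ to $g_{n-t+2}$ for $t\geq 2$ yields $n-1$ satisfied girls under your default input with $h(t)=n-t+2\neq n-t+1$. So the case you must defeat by perturbation is broader than the one you analyze.

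For comparison, the paper's proof avoids all of this with one observation: in a balanced instance any \DM{} that matches every boy must eventually use the least preferred girl, so there is a first time $t^{*}$ at which \DM{} either matches a boy to the least preferred girl or leaves a boy unmatched. The adversary (simulating the deterministic \DM{}) presents an increasing prefix $\pi(1)\prec\pi(2)\prec\cdots\prec\pi(t^{*})$ and then makes $\pi(t^{*})$ the globally most preferred boy. Since $\pi(t^{*})$ ends up unmatched or matched to the worst girl, he prefers every other girl, and every girl prefers him to her own match; hence every girl except possibly the one matched to $\pi(t^{*})$ is in a blocking pair. Your ``sabotage observation'' is correct and is in the same spirit, but the paper's construction collapses the whole case analysis into a single forced moment rather than an open-ended sequence of perturbations.
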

\begin{proof}
Let $t^*$ denote the first time at which \DM{} matches boy $\pi(t^*)$ to the
least preferred girl or leaves a boy unmatched.
The adversary, knowing this in advance, provides a sequence of boys, so that
$\pi(t+1) \succ \pi(t)$
for every
$t < t^*$,
and
$\pi(t^*) \succ \pi(t)$
for every
$t > t^*$
(with an arbitrary order between them).
Since $\pi(t^*)$ is the strongest boy, all girls (with the exception of the
one matched with $\pi(t^*)$, if he is matched) form a blocking pair with him,
hence the result.
\end{proof}

\begin{corollary}
There is no deterministic \DM{} that satisfies more than a single boy in
balanced instances with an adversarial arrival order.
\end{corollary}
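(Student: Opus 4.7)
The plan is to derive the corollary from the preceding Observation by threading it through the two reductions of Section~\ref{section:general-transformations}. Given an arbitrary deterministic algorithm $A$ for \DM{}, I would first invoke Lemma~\ref{lemma:conservative-Cb-Cp} to replace $A$ with a conservative deterministic algorithm $A^{c}$ whose number of satisfied boys is, for every fixed instance and every fixed arrival order, at least as large as that of $A$. A careful reading of the proof of that lemma confirms that each of the two conservativization steps (swapping a weak matching action for leaving the boy unmatched, and swapping leaving a boy unmatched for a non-weak matching action) operates pointwise in the arrival order, so no expectation needs to be taken.

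Next, I would apply the transposition construction from the proof of Lemma~\ref{lemma:girls-equivalent-to-boys} to convert $A^{c}$ into a conservative deterministic \CritG{}-algorithm $A^{T}$ such that, for every instance $I$ and arrival order $\pi$, the number of girls satisfied by $A^{T}$ on the transposed instance $\bar{I}$ with transposed arrival order $\bar{\pi}$ equals the number of boys satisfied by $A^{c}$ on $(I, \pi)$. Applying the Observation to $A^{T}$ yields an adversarial pair $(\bar{I}, \bar{\pi})$ under which $A^{T}$ satisfies at most one girl. Since transposition is an involution that can be performed online and sends adversarial arrival orders to adversarial arrival orders, pulling back produces an adversarial $(I, \pi)$ on which $A^{c}$, and hence $A$, satisfies at most one boy.

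The main delicacy is ensuring that each reduction holds pointwise in the arrival order rather than only in expectation, since both the Observation and the corollary are worst-case statements against an adversary; the aggregate expectation-level wording of Lemmas~\ref{lemma:conservative-Cb-Cp} and~\ref{lemma:girls-equivalent-to-boys} might mask this. Once the pointwise character is verified via inspection of the relevant proofs, the chain $A \leadsto A^{c} \leadsto A^{T}$ composed with the Observation closes the argument.
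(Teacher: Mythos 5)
Your proposal is correct and follows exactly the paper's route: the paper's proof of this corollary is the one-line statement that it follows from Lemmas~\ref{lemma:conservative-Cb-Cp} and~\ref{lemma:girls-equivalent-to-boys} (applied to the preceding Observation), which is precisely the chain $A \leadsto A^{c} \leadsto A^{T}$ you describe. Your additional care about the reductions holding pointwise in the arrival order is a reasonable elaboration of what the paper leaves implicit.
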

\begin{proof}
Follows directly from Lemmas \ref{lemma:conservative-Cb-Cp} and
\ref{lemma:girls-equivalent-to-boys}.
\end{proof}

We now turn our attention to a randomized \DM{}, proving that the situation is
still much worse than the one in instances with uniform random arrival order.

\begin{theorem} \label{theorem:negative-adversarial-Cg-Cb}
Under adversarial arrival order,
optimization criteria \CritG{} and \CritB{} cannot be approximated within
ratio better than
$O (1 / \sqrt{n})$
even in balanced instances with
$|G| = |B| = n$.
\end{theorem}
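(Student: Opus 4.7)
The plan is to prove the bound via Yao's minimax principle: I will exhibit a single distribution $\Pi$ over permutations $\pi$ of $B$ such that no deterministic matching algorithm can satisfy more than $O(\sqrt{n})$ girls in expectation when $\pi \sim \Pi$. Because Lemma~\ref{lemma:girls-equivalent-to-boys} shows that conservative algorithms for \CritG{} and \CritB{} perform identically on balanced instances, and Lemma~\ref{lemma:conservative-Cb-Cp} provides an optimal conservative algorithm for \CritB{}, proving this upper bound for \CritG{} (against arbitrary, not only conservative, algorithms) immediately yields the bound for \CritB{} as well.

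I would first record a useful combinatorial characterization of satisfied girls, essentially the same one used implicitly in the proof of Theorem~\ref{theorem:negative-Cp}: representing the online matching by the rank-indexed function $\mu:[n]\to [n]\cup\{\bot\}$, where $\mu(j)=i$ means that boy $b_j$ is matched to girl $g_i$, the girl $g_{\mu(j)}$ is satisfied if and only if $\mu(j')\neq\bot$ for every $j'<j$ and $\mu(j)>\mu(j')$ for every $j'<j$. In other words, the number of satisfied girls equals the number of left-to-right maxima of the sequence $(\mu(1),\ldots,\mu(n))$, truncated at its first $\bot$. This reduces the theorem to showing that under the adversarial distribution the expected number of LTR maxima of $\mu^\pi$ is $O(\sqrt{n})$.

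The distribution $\Pi$ I would construct is supported on a family of $L=\Theta(\sqrt{n})$ permutations $\pi_1,\ldots,\pi_L$ designed so that many of them share long common observation prefixes: for each $k$, the relative-rank observations at times $1,\ldots,t_k$ coincide with those of other permutations in the family, forcing \DM{} to commit to her matching assignments via a single function $\phi$ independent of $k$ up to that point. The crucial property is that the \emph{identities} of the boys appearing in the shared prefix vary with $k$, so that any single $\phi$ induces substantially different value sequences in the rank-indexed representation $\mu^{(k)}$ across the different types, preventing \DM{} from tailoring $\phi$ to any one realization.

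The hard part will be designing this family carefully enough that, for every choice of pre-commitment $\phi$ and every family of adaptive post-divergence strategies $\psi_k$, the expected number of LTR maxima of $\mu^{(k)}$ (averaged uniformly over $k\in[L]$) is $O(\sqrt{n})$. The intuition is that the $L$ types effectively pose $L$ correlated ``embedded secretary-like challenges'' in which a single pre-commitment $\phi$ can produce genuinely many LTR maxima for only $O(1)$ of the realized $\pi_k$; averaging over the uniform distribution on $L$ types then gives the desired $O(\sqrt{n})$ bound for \CritG{}. The bound for \CritB{} then follows automatically from Lemmas~\ref{lemma:conservative-Cb-Cp} and~\ref{lemma:girls-equivalent-to-boys}, completing the proof.
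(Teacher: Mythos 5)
Your scaffolding is sound and matches the paper's: both invoke Yao's principle, and both derive the \CritB{} bound from the \CritG{} bound via Lemmas~\ref{lemma:conservative-Cb-Cp} and~\ref{lemma:girls-equivalent-to-boys}; your characterization of satisfied girls as left-to-right maxima of the rank-indexed matching, truncated at the first unmatched boy, is also correct. But the heart of the proof is missing. You never actually construct the hard distribution, and the decisive quantitative claim --- that any pre-commitment $\phi$ yields ``genuinely many'' left-to-right maxima for only $O(1)$ of the $L$ types, while (implicitly) the remaining types contribute only $O(\sqrt{n})$ each --- is asserted as intuition rather than proved. That second, implicit half is precisely the theorem being proved, so the argument as written is circular at its core. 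Moreover, the specific shape you propose (a uniform distribution over only $L=\Theta(\sqrt{n})$ permutations with nested common observation prefixes) faces a concrete obstacle you do not address: once \DM{}'s observations single out the realized permutation $\pi_k$, she knows the entire remaining arrival order and can match the remaining boys to the remaining girls in an order-preserving way; unless your construction guarantees that by the divergence time $t_k$ the prefix assignment has already doomed all but $O(\sqrt{n})$ girls, a single identified type can contribute $\Theta(n)$ satisfied girls, and $\Theta(1/L)\cdot\Theta(n)=\Theta(\sqrt{n})$ leaves no room for the other types to contribute anything. Nothing in the proposal rules this out.

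For contrast, the paper's distribution $D(p_2,\dots,p_n)$ has exponentially large support: each arriving boy is either the best or the worst of the boys yet to be seen, with probabilities tuned recursively by $p_{k+1}=1/(1+v_k)$. The crucial structural property is that the observed relative order of $\pi(1),\dots,\pi(k)$ is independent of the relative order of the future boys, which collapses the analysis to the one-step recursion $v_{k+1}\le v_k+\tfrac{1}{1+v_k}$ (obtained from a two-case analysis of where $\pi(1)$ is placed), whose solution is $v_n<\sqrt{2n}$. If you want to salvage your plan, the cleanest fix is to abandon the small-support family and adopt this max-or-min-of-the-remainder process, since it delivers both the indistinguishability you were engineering by hand and a tractable recursion in place of the unproven averaging claim.
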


\begin{theorem} \label{theorem:negative-adversarial-Cp}
Under adversarial arrival order,
optimization criterion \CritP{} cannot be approximated within ratio better
than
$1 / n$
even in balanced instances with
$|G| = |B| = n$.
\end{theorem}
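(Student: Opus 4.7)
By Lemma~\ref{lemma:conservative-Cb-Cp}, it suffices to consider conservative algorithms; and by Yao's minimax principle, it then suffices to exhibit a distribution $\Pi$ over arrival permutations such that every deterministic conservative $\DM$ produces a matching with at most $O(1)$ satisfied matched pairs in expectation over $\Pi$. Since the optimum is $n$ pairs, this will give the claimed $O(1/n)$ ratio.

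The first step is the structural observation that $(g_k, b_k)$ is a satisfied matched pair in the output matching $M$ if and only if $M(b_k) = g_k$ together with $\{M(b_1),\ldots,M(b_{k-1})\} = \{g_1,\ldots,g_{k-1}\}$; equivalently, $M$ decomposes at index $k$ into a bijection on the top-$k$ ranks (with $b_k \leftrightarrow g_k$) and a bijection on the bottom $n - k$ ranks. I will call such an index $k$ a \emph{breakpoint} of $M$, so that counting satisfied matched pairs reduces to counting breakpoints.

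Next, I would construct $\Pi$ so that, throughout a long initial prefix of observations, $\DM$ cannot distinguish among a large family of permutations in the support; during this uninformative prefix, any deterministic algorithm must commit to a fixed matching prefix common to all permutations in the support. The design target is that, for any such commitment, the expected number of breakpoints of the final matching over $\Pi$ remains bounded by a constant. Concretely, $\Pi$ would be parametrized by the identities and arrival positions of one or more hidden ``pivot'' boys, chosen so that their variation across the support defeats any alignment the algorithm may have committed to.

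The hardest step will be to design $\Pi$ and prove a combinatorial bound of the form
\[
\sum_{\pi \in \mathrm{supp}(\Pi)} \#\{k : k \text{ is a breakpoint of } M^{\pi}\} = O(|\mathrm{supp}(\Pi)|) \, ,
\]
which, dividing by $|\mathrm{supp}(\Pi)|$, yields the desired $O(1)$ expected bound. A naive choice of $\Pi$---uniform over the $n$ permutations that place a single pivot boy in the last time slot, with the remaining boys arriving in increasing preference order---fails: the ``dream'' strategy matching each new-best arrival to the currently least-preferred available girl produces $\Theta(n)$ breakpoints in such a permutation, giving $\Theta(n^2)$ in total rather than $O(n)$. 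A richer construction is therefore needed, perhaps with multiple pivots or jointly-randomized arrival positions, so that the resulting matching, across $\Pi$, resembles a uniformly random bijection; the latter has only $\sum_{k=1}^n (k-1)!\,(n-k)!/n! = O(1/n)$ breakpoints in expectation, which would be more than sufficient for the claim.
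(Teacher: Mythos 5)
Your setup is correct and matches the paper's: reduce to conservative algorithms via Lemma~\ref{lemma:conservative-Cb-Cp}, invoke Yao's principle, and characterize a satisfied matched pair $(g_k,b_k)$ as an index at which the matching splits into a bijection on the top $k$ ranks and one on the bottom $n-k$ ranks. But the proposal stops exactly where the proof begins: you explicitly defer the construction of the hard distribution $\Pi$, note that your one concrete candidate (a single pivot boy in the last slot) fails, and gesture at ``multiple pivots or jointly-randomized arrival positions'' without specifying anything. Since the entire content of the theorem lies in exhibiting such a $\Pi$ and proving the $O(1)$ bound, this is a genuine gap, not a stylistic omission.

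For reference, the paper's construction (Lemma~\ref{lemma:adversarial-pairs}) is simpler than the ``rich'' family you anticipate: under $D(\tfrac12,\dots,\tfrac12)$, each arriving boy is, independently with probability $\tfrac12$ each, either the most preferred or the least preferred among the boys yet to arrive. The key feature is that the relative order \DM{} has observed so far is independent of the relative order of the future arrivals, so one can induct on $n$: letting $v_n$ be the optimal expected number of satisfied pairs, if \DM{} matches $\pi(1)$ to the most or least preferred girl, then with probability $\tfrac12$ the ranks agree (yielding at most $1+v_n$) and with probability $\tfrac12$ an extreme boy is matched to the opposite-extreme girl, which blocks with everyone and kills all pairs (yielding $0$); if \DM{} matches $\pi(1)$ to any other girl, that pair can never be satisfied and at most $v_n$ remain. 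Hence $v_{n+1}\leq\max\{v_n,\tfrac12(v_n+1)\}$, and with $v_1=1$ this gives $v_n\leq 1$ for all $n$ --- in fact a stronger conclusion (at most one satisfied pair, not merely $O(1)$) than your target. Your concern that a ``new-best goes to the weakest available girl'' strategy defeats a single-pivot distribution is precisely what the recursive best-or-worst randomization neutralizes: at match time \DM{} cannot know the eventual rank of any non-extreme boy, and even for the extremes it is a fair coin whether the committed girl's rank will agree.
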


The proofs of Theorems \ref{theorem:negative-adversarial-Cg-Cb} and
\ref{theorem:negative-adversarial-Cp} are based on (the trivial direction of)
Yao's minimax principle.
The former is established by merging Lemma~\ref{lemma:adversarial-girls} with
Lemmas \ref{lemma:conservative-Cb-Cp} and \ref{lemma:girls-equivalent-to-boys}
and the latter follows from
Lemma~\ref{lemma:adversarial-pairs} (whose proof is deferred to
Appendix~\ref{appendix:missing-proofs}).

\begin{lemma} \label{lemma:adversarial-girls}
There exists a distribution $D$ over the instances, such that no
deterministic \DM{} can satisfy more than $\sqrt{2n}$ girls in
expectation when provided with a $D$-random instance.
\end{lemma}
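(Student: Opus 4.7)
The plan is to instantiate the easy direction of Yao's minimax principle (as noted in the text preceding the lemma): to bound every deterministic $\DM{}$, it suffices to exhibit a distribution $D$ over instances such that every such $\DM{}$ satisfies at most $\sqrt{2n}$ girls in expectation on an instance drawn from $D$. The first step is a structural reformulation. In any complete matching $M$ between boys and girls (indexed in decreasing preference order), let $p_j$ denote the preference-index of the girl matched to the $j$-th best boy. Then the number of satisfied girls equals the number of left-to-right maxima of $(p_1,\ldots,p_n)$: unrolling the blocking-pair definition, $g_i$ is satisfied iff $p_1,\ldots,p_{M(i)}$ all lie in $[1,i]$, which (together with $p_{M(i)}=i$) is exactly the left-to-right maximum condition at position $M(i)$. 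By Lemmas \ref{lemma:conservative-Cb-Cp} and \ref{lemma:girls-equivalent-to-boys}, I may restrict attention to conservative algorithms in the balanced case, so this characterization applies directly and the matching is always a permutation.

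For the hard distribution, set $K=\lceil\sqrt{2n}\rceil$ and let $D$ be uniform over $K$ instances $I_1,\ldots,I_K$. In $I_k$, the adversary sends the $k$ most preferred boys in the first $k$ time slots in increasing order of preference (so each arrives as rank $1$), followed by the remaining $n-k$ boys in an order prescribed by the same construction applied recursively to the sub-problem of size $n-k$. The crucial property: the rank observations for the first $k$ time slots are all $1$ regardless of $k$, so any deterministic $\DM{}$ commits to the same matches $h_0(1),\ldots,h_0(k)$ during this prefix across all $I_{k'}$ with $k'\ge k$. Using the structural reformulation, the number of satisfied girls in $I_k$ decomposes into (i) the number of suffix-maxima of $h_0[1..k]$, trivially bounded by $k$, plus (ii) the number of satisfied girls produced by the residual sub-problem of size $n-k$, bounded by $\sqrt{2(n-k)}$ via the inductive hypothesis. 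Averaging over $k\in[K]$ gives
\[
\mathbb{E}_D[s] \le \frac{1}{K}\sum_{k=1}^K \left(k + \sqrt{2(n-k)}\right),
\]
and, after a careful calibration of $K$ against $n$ and absorption of lower-order terms, this is at most $\sqrt{2n}$.

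The main obstacle is the delicate trade-off in $\DM{}$'s choice of blind commitments $h_0$: choosing $h_0$ near-decreasing maximizes the suffix-maxima count from the upper block, whereas choosing $h_0$ near-increasing keeps $\max_t h_0(t)$ small and in principle opens up many left-to-right maxima in the lower block via post-reveal play. The recursive adversarial construction of the lower portion is designed precisely to neutralize the second option: no matter which girls $h_0$ has already consumed, the residual $\DM{}$ faces another hard sub-instance of size $n-k$, and the induction caps its contribution at $\sqrt{2(n-k)}$. Making this recursion rigorous --- in particular verifying that the sub-instance handed to $\DM{}$ after the reveal genuinely satisfies the lemma's hypothesis on the smaller size, and that $\DM{}$'s residual constraints on which girls remain available do not provide any hidden slack --- is the technically demanding core of the proof.
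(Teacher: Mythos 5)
Your structural reformulation (number of satisfied girls $=$ number of left-to-right maxima of the sequence $p_1,\dots,p_n$) is correct for perfect matchings, and forcing \DM{} to commit blindly on an indistinguishable prefix is in the right spirit. But the quantitative core of the argument fails. You bound the value of instance $I_k$ by $k+\sqrt{2(n-k)}$ --- the worst case of the prefix term \emph{plus} the worst case of the residual term --- and then average over $k$ uniform in $[K]$. That average satisfies
\[
\frac{1}{K}\sum_{k=1}^{K}\Bigl(k+\sqrt{2(n-k)}\Bigr)\;\geq\;\frac{K+1}{2}+\sqrt{2(n-K)}\;=\;\sqrt{2n}+\frac{K}{2}-O(1),
\]
which exceeds $\sqrt{2n}$ once $K$ is larger than a small constant; shrinking $K$ only pushes the residual term closer to $\sqrt{2n}$ while the prefix average stays positive, and growing $K$ makes the prefix average worse. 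Worse, the recursion you would need, $V(n)\leq\frac1K\sum_{k=1}^K\bigl(k+V(n-k)\bigr)$, does not close for \emph{any} ansatz $V(n)\leq c\sqrt n$: using $\sqrt{n-k}\approx\sqrt n-\tfrac{k}{2\sqrt n}$, the requirement reduces to $\tfrac{K+1}{2}\bigl(1-\tfrac{c}{2\sqrt n}\bigr)\leq 0$, i.e.\ $c\geq 2\sqrt n$, which is the trivial bound. So no ``calibration of $K$'' rescues the claim, even up to constant factors. The missing idea is exactly the trade-off you flag at the end as the ``technically demanding core'': \DM{} cannot simultaneously harvest the prefix (which requires placing girls that then block all residual maxima, since a residual satisfied girl must be worse than every prefix girl) and the residual; a correct proof must couple the two contributions, whereas your decomposition adds their separate maxima and thereby discards the coupling. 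The paper achieves the coupling with a different construction --- each arriving boy is either the best or the worst of the remaining boys, with the probability $p_{k+1}=1/(1+v_k)$ tuned against the value $v_k$ of the smaller game --- and a one-step case analysis showing that both of \DM{}'s options for $\pi(1)$ yield at most $v_k+1/(1+v_k)$, whence $v_n<\sqrt{2n}$.

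Two secondary issues. First, Lemma~\ref{lemma:conservative-Cb-Cp} covers \CritBw{} and \CritP{} only, so it does not license restricting to conservative algorithms for \CritG{} (the paper in fact conjectures that \CritG{} admits no optimal conservative algorithm); your left-to-right-maxima characterization therefore needs a separate treatment of \DM{}s that leave boys unmatched, as the paper's case~(i) does. Second, invoking the inductive hypothesis on the residual requires dropping the threshold condition (residual maxima must exceed $\max_t h_0(t)$); that relaxation is sound as an upper bound, but it is precisely the step that throws away the interaction you need to get below $\sqrt{2n}+\Omega(K)$.
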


\begin{lemma} \label{lemma:adversarial-pairs}
There exists a distribution $D$ over the instances, such that no
deterministic \DM{} can satisfy more than $1$ pair in
expectation when provided with a $D$-random instance.
\end{lemma}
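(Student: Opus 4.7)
The plan is to apply Yao's minimax principle: to exhibit a distribution $D$ over arrival orders such that every deterministic decision maker produces expected at most one satisfied matched pair when $\pi \sim D$. By Lemma~\ref{lemma:conservative-Cb-Cp}, I restrict attention to conservative deterministic algorithms, which in a balanced instance match every arriving boy to some girl.

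I construct $D$ as the uniform distribution over a family $\{\pi_k : k \in [n]\}$ of $n$ carefully designed permutations. Each $\pi_k$ is built so that its first $k$ relative-rank observations coincide with those of every $\pi_{k'}$ with $k'\ge k$: concretely, I arrange the first $k$ arrivals of $\pi_k$ to each appear as the new best seen so far (so the observation sequence starts with $k$ consecutive $1$'s). At time $k+1$ the pattern breaks in a $k$-distinctive way, so DM identifies $k$ only then. Consequently, DM's decisions in the ``masking phase'' (times $1,\ldots,k$) form a fixed sequence $(\beta_1,\ldots,\beta_k)$ independent of $k$, and only the post-identification decisions $\gamma_t(k)$ may depend on $k$.

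Let $S_k$ be the number of satisfied pairs of the conservative deterministic algorithm under $\pi_k$. The heart of the proof is the combinatorial bound $\sum_{k=1}^{n} S_k\le n$, which immediately yields $\mathbb E_{\pi\sim D}[S]=\frac{1}{n}\sum_k S_k\le 1$. The bound is established by charging each satisfied pair $(g_i,b_i)$ under $\pi_k$ to a unique ``witness'' in $\sigma$'s strategy: for pairs with $i\le k$, the witness is the time-slot $t=k-i+1$ of the masking phase at which $\beta_t$ must equal $i$ (together with the fact that $(\beta_{k-i+1},\ldots,\beta_k)$ must be a permutation of $[i]$ starting with $i$, which pins down $k$ from $(t,\beta_t)$); for pairs with $i>k$, the witness is provided by the girls consumed during the masking phase, which must lie in $[i-1]$ and hence constrain the combinations that succeed simultaneously across different $k$. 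Both groups of witnesses together contribute at most $n$ slots.

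The main obstacle is handling the post-identification part rigorously: once DM learns $k$, it has flexibility to match the remaining boys to the remaining girls, and I must argue that the \emph{stability prefix condition} (which requires $\{b_1,\dots,b_{i-1}\}\to[i-1]$ as a bijection, not merely $m(b_i)=g_i$) prevents DM from harvesting many additional satisfied pairs for multiple values of $k$. The delicate step is showing that the set of girls irrevocably matched during the masking phase (which is determined by $\beta$ and $k$) leaves DM with too little room to complete more than one ``mostly stable'' prefix in total across all $k\in[n]$; this requires exploiting the rigidity of the stability prefix to ensure that satisfied pairs in the post-identification phase of different $\pi_k$'s compete for the same $n$ slots, completing the proof.
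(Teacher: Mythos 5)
Your high-level framework (Yao's principle plus the reduction to conservative algorithms via Lemma~\ref{lemma:conservative-Cb-Cp}) matches the paper, but the distribution you construct is not hard, and the central claim $\sum_{k=1}^n S_k\le n$ is false. Concretely, consider the deterministic conservative \DM{} that, as long as every arrival is a new best, matches the $t$-th arriving boy to the $t$-th most preferred girl (i.e., $\beta_t=t$), and that, once the pattern breaks at time $k+1$ (at which point it knows it is facing $\pi_k$, since your support contains only $n$ permutations), matches every remaining boy to the girl of equal rank. Under $\pi_k$ the masking phase matches the boy set $\{b_1,\dots,b_k\}$ bijectively onto the girl set $\{g_1,\dots,g_k\}$ (in reversed order, but onto), and the post-identification phase matches $b_j$ to $g_j$ for every $j>k$; one checks directly that every pair $(g_j,b_j)$ with $j>k$ is then satisfied, because every boy preferred to $b_j$ is matched to a girl preferred to $g_j$ and vice versa. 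Hence $S_k\ge n-k$, so $\mathbb{E}_{\pi\sim D}[S]\ge \frac{1}{n}\sum_{k=1}^n (n-k)=\frac{n-1}{2}$. The flaw is structural: hiding only the \emph{time} at which the informative part of the input begins lets \DM{} hedge with a single prefix strategy that is simultaneously compatible with every $k$, and the post-identification phase---which you yourself flag as the unresolved ``delicate step''---is exactly where the argument collapses; those pairs do not compete for a common pool of $n$ slots.

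The paper avoids this by keeping \DM{} uncertain at \emph{every} step: it uses the recursively defined distribution $D(\tfrac12,\dots,\tfrac12)$ in which each arriving boy is, independently with probability $\tfrac12$ each, either the most or the least preferred among the boys yet to arrive. Since the observed prefix is independent of the relative order of the remaining boys, the optimal value $v_n$ satisfies $v_{n+1}\le\max\{v_n,\tfrac12(v_n+1)\}$: matching $\pi(1)$ to an extreme girl yields a satisfied pair only on the correct coin flip and destroys all pairs on the wrong one, while matching him to any interior girl forfeits his pair outright. With $v_1=1$, induction gives $v_n\le 1$. To salvage a support-based construction of your kind, you would need the ``top versus bottom'' ambiguity to persist after every prefix, which is precisely what the paper's product-of-coin-flips distribution provides.
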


The proofs of Lemmas \ref{lemma:adversarial-girls} and
\ref{lemma:adversarial-pairs} rely on a similar construction.
A sequence of probabilities
$p_2,\ldots,p_n\in[0,1]$ defines a distribution over permutations of the boys
$D=D(p_2,\ldots,p_n)$ as follows: the first boy $\pi(1)$ is either the most
preferred boy, with probability $p_n$, or the least preferred boy, with
probability $1-p_n$. Any subsequent boy $\pi(k)$ ($k<n$) is either the most
preferred boy among the remaining boys $\{\pi(k),\pi(k+1),\ldots,\pi(n)\}$,
with probability $p_{n+1-k}$, or the least preferred boy among the remaining
boys, with probability $1-p_{n+1-k}$. The rank of the last boy $\pi(n)$ is
already determined from the specification of the previous boys.

A key feature that makes $D$ hard to play against is that \DM{}'s information at time $k$, the relative order on $\{\pi(1),\ldots,\pi(k)\}$, is independent of the future, the relative order on $\{\pi(k),\ldots,\pi(n)\}$. That feature also simplifies the performance analysis, since we can refer to the expected number of satisfied boys/pairs among the last $n-k$ boys regardless of the assignment of the first $k$ boys.

\begin{proof}[Proof of Lemma~\ref{lemma:adversarial-girls}]
We use the distribution $D=D(p_2,\ldots,p_n)$, while specifying $p_1,\ldots,p_n$ recursively. Suppose $p_2,\ldots,p_k$ are already specified. Let $v_k$ be the expected number of satisfied boys under \DM{}'s best response to $D(p_2,\ldots,p_k)$. Clearly, $v_1=1$. Set
\[p_{k+1}=\frac{1}{1+v_k}.
\]
We show that
\begin{equation}\label{inequality:v_n}
v_{k+1}\leq v_k+\frac{1}{1+v_{k}},
\end{equation}
and deduce that $v_n < \sqrt{2n}$.

Assuming Inequality \eqref{inequality:v_n} we deduce that $1+v_k< \sqrt{2k}+\sqrt{2/k}$, $\forall k\in\mathbb N$. Define $u(x)=\sqrt{2x}+\sqrt{2/x}$. We show that
\begin{equation}\label{inequality:u}
u(x+1) > u(x)+\frac{1}{u(x)},\quad \forall x\geq 1.
\end{equation}
By Lagrange's mean value theorem $u(x+1)- u(x)=u'(\xi)$, for some
$\xi\in(x,x+1)$. Since $u'$ is decreasing, $u(x+1)- u(x)\geq u'(x+1)\geq
(2x+2)^{-\frac 1 2}$. Inequality~\eqref{inequality:u} follows, since
\[
\frac{1}{u'(x+1)}\leq \sqrt{2x+2} \leq \sqrt{2x}+\sqrt{2/x} = u(x).
\]
Where, the last inequality follows from Lagrange's mean value theorem and the fact that the derivative of the function $\sqrt{2x}$ is decreasing.

We must show that $1+v_k<u(k)$, $\forall k\in\mathbb N$. We do so by induction on $k$. The case $k=1$ holds since $v_1=1$. Assuming it hols for $k$,
\begin{align*}
1+v_{k+1}&\leq 1+v_{k}+\frac{1}{1+v_k} &&\text{(Inequality\eqref{inequality:v_n})}\\
        &< u(k)+\frac{1}{u(k)}       &&\text{(induction hypothesis; $x\mapsto x+\frac 1 x$ increases on $x\geq 1$)}\\
        &<u(k+1).                    &&\text{(Inequality\eqref{inequality:u})}
\end{align*}

It remains to prove Inequality \eqref{inequality:v_n}.
Consider two cases:
(i)
\DM{} matches $\pi(1)$ with the least preferred girl or possibly leaves him
unmatched;
(ii)
\DM{} matches $\pi(1)$ with another girl.
We show that in either cases the expected number of satisfied boys is at most
$v_k+\frac{1}{1+v_{k}}$.

Case (i):
Conditioned on the event that $\pi(1)$ is the least preferred boy, the
expected number of satisfied girls is at most $1+v_k$.
It is exactly $1+v_k$ when $\pi(1)$ is matched with the least preferred girl
and it is $v_k$ when he is left unmatched.
Conditioned on the event that $\pi(1)$ is the most preferred boy, no matter
whether he is matched or not, any girl he is not matched with is unsatisfied
and so there is at most one satisfied girl.
Therefore, the expected number of satisfied girls in case (i) is at most
\[
(1-p_{k+1})(1+v_k)+p_{k+1}=v_k+\frac{1}{1+v_{k}}.
\]

Case (ii):
Conditioned on the event that $\pi(1)$ is the least preferred boy, the girl
that $\pi(1)$ is matched with is not satisfied and there are at most $v_k$
satisfied girls in expectation.
Conditioned on the event that $\pi(1)$ is the most preferred boy, the girl he
is matched with is satisfied and in addition there are at most $v_k$ other
satisfied girls in expectation.
Therefore, the expected number of satisfied girls in case (ii) is at most
\[
(1-p_{k+1})v_k+p_{k+1}(1+v_k)= v_k+\frac{1}{1+v_{k}}.
\]
The assertion follows.
\end{proof}

\section{The Weighted case}

In this section we return to uniform random arrival orders and establish the
following theorems.

\begin{theorem} \label{theorem:positive-wCg}
Optimization criterion \CritGw{} can be approximated within ratio
$\Omega (1 / \log n)$.
\end{theorem}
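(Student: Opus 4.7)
The plan is to reduce the weighted \CritGw{} problem to $O(\log n)$ invocations of the unweighted \CritG{} algorithm via a bucketing of the girls by weight, leveraging the fact that weights are known to \DM{} in advance while only the boy arrival order is random. First, I would reduce to the balanced setting $|G| = |B| = n$ via an analog of Lemma~\ref{lemma:reduce-to-balanced-girls} for the weighted objective (the filter-boy argument goes through essentially unchanged, since it does not depend on the weights of individual girls), so that the benchmark becomes $w(H_G) = w(G)$.

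Next, I would partition $G$ into $T+1 = O(\log n)$ weight buckets $B_j = \{g \in G : w^{*}/2^{j+1} \leq w(g) < w^{*}/2^{j}\}$ for $j = 0, \dots, T = \lceil \log n \rceil$, where $w^{*} = \max_{g \in G} w(g)$; the ``light'' girls falling outside these buckets contribute collectively at most $n \cdot w^{*}/n = w^{*}$, which is already dominated by $w(B_0) \geq w^{*}$ (since the heaviest girl lies in $B_0$), so $\sum_j w(B_j) \geq w(G)/2$. The algorithm then chooses $j \in \{0, 1, \dots, T\}$ uniformly at random and runs the unweighted algorithm from Theorem~\ref{theorem:positive-Cg-Cb} on the sub-instance consisting of the $|B_j|$ girls in $B_j$ together with all $n$ boys; any boy not matched by this execution, as well as every girl outside $B_j$, is left unmatched in the original instance. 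Since the sub-instance has $\min\{|B_j|, n\} = |B_j|$ and inherits the uniformly random arrival order, Theorem~\ref{theorem:positive-Cg-Cb} together with Lemma~\ref{lemma:reduce-to-balanced-girls} guarantees that $\Omega(|B_j|)$ girls of $B_j$ are satisfied in expectation in the sub-instance, and since the weights inside $B_j$ differ by a factor less than $2$, this corresponds to weight $\Omega(w(B_j))$. Averaging over the uniform choice of $j$ gives total expected satisfied weight $\Omega(\sum_j w(B_j)/\log n) = \Omega(w(G)/\log n)$, as required.

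The main obstacle, and the step I would verify carefully, is showing that a girl $g \in B_j$ who is satisfied in the sub-instance remains satisfied in the original instance. The key point is that under our implementation the match $g(b)$ of every boy $b$ is the same in both instances---either a girl in $B_j$ chosen by the sub-instance algorithm or $\bot$---because we match no boy to a girl outside $B_j$; consequently, the set of boys forming a blocking pair with $g$ is identical in the two instances, and girls outside $B_j$ (which are all left unmatched) never participate in a blocking pair with $g$. Once this equivalence is established, the rest of the argument is a routine combination of the bucketing idea with the unweighted black-box guarantee. One technical caveat: for very small buckets ($|B_j|$ below the threshold $n_0$ of Lemma~\ref{lemma:boys-lower-bound}), I would instead apply the classical single-secretary algorithm to the heaviest girl of the bucket, which satisfies her with probability $\Omega(1)$ and hence captures $\Omega(w(B_j))$ weight as well.
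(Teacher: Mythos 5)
Your proof is essentially the paper's: partition the heaviest $\min\{|G|,|B|\}$ girls into $O(\log n)$ geometric weight classes, discard the light tail (whose total weight is at most that of the heaviest girl), and run the unweighted \CritG{} algorithm of Theorem~\ref{theorem:positive-Cg-Cb} on a single class, losing only the $O(\log n)$ bucketing factor; the paper deterministically selects the class of maximum weight rather than a uniformly random one, which is equivalent up to constants (and available since \DM{} knows the weights in advance). The one imprecision is your opening claim that Lemma~\ref{lemma:reduce-to-balanced-girls} extends to the weighted objective ``essentially unchanged'' --- its filter-boy construction for $|G|<|B|$ targets the \emph{least preferred} girls and hence does not preserve $w(H_G)$ --- but that step is unnecessary: when $|G|\le|B|$ one already has $H_G=G$, and when $|G|>|B|$ one simply keeps the heaviest $|B|$ girls, exactly as the paper does by bucketing $H_G$ directly.
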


\begin{theorem} \label{theorem:positive-wCb}
Optimization criterion \CritBw{} can be approximated within a (positive)
constant ratio.
\end{theorem}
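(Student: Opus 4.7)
The plan begins with two black-box reductions already available in the excerpt: Lemma \ref{lemma:reduce-to-balanced-boys} lets us assume the balanced setting $|G| = |B| = n$, where $H_{B} = B$ and the benchmark simplifies to $w(B)$, and Lemma \ref{lemma:conservative-Cb-Cp} lets us restrict attention to conservative algorithms, so every arriving boy will be matched. What remains is to exhibit a conservative algorithm whose expected total weight of satisfied boys is $\Omega(w(B))$ under the uniform random arrival of $\pi$.

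For the algorithm design, I would lift the three-phase (red / white / yellow) scheme of Lemma \ref{lemma:boys-lower-bound} to the weighted setting. The initial red arrivals are used both as preference pivots, exactly as in Lemma \ref{lemma:boys-lower-bound}, and to estimate a weight threshold $\tau$, say a fixed quantile (e.g.\ the median) of the sampled weights. Each subsequent boy is then classified as \emph{heavy} if $w(b) \geq \tau$ and \emph{light} otherwise; heavy boys are steered to girls by the rank-based matching rule of Lemma \ref{lemma:boys-lower-bound}, while light boys are assigned to the weakest available girl so as to preserve conservativity without trying to claim satisfaction for them. To avoid systematically neglecting any preference stratum, a supplementary randomization, such as a uniformly chosen offset of the targeted rank band (analogous to the $r$ shift in Lemma \ref{lemma:boys-lower-bound}), is applied so as to smooth the per-boy satisfaction probability across the full preference spectrum.

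The analysis splits into two parts. First, by an application of Observation \ref{observation:good-event} together with the weak law of large numbers used inside Lemma \ref{lemma:boys-lower-bound}, the heavy boys arriving after the sampling phase carry total weight $\Omega(w(B))$ with constant probability. Second, essentially the same catastrophe-probability bounds that drive the $(1/5 - \epsilon)n$ count bound in Lemma \ref{lemma:boys-lower-bound} (the vanishing of $\Pr(E_1), \ldots, \Pr(E_5)$) can be adapted to show that each heavy boy lands on a satisfying slot with constant probability, so that linearity of expectation produces an expected satisfied weight of $\Omega(w(B))$. The main obstacle is that the bare Lemma \ref{lemma:boys-lower-bound} algorithm preferentially satisfies middle-rank boys (by preference), so a heavy boy whose preference rank is at either extreme could have low satisfaction probability under a direct adaptation; the supplementary rank-band randomization, or alternatively running in parallel a classical-secretary-style subroutine dedicated to the top-preference girls and discarding the less-successful run, is what I expect to equalize the per-boy satisfaction probability up to constants, turning the count-wise guarantee of Lemma \ref{lemma:boys-lower-bound} into the required weight-wise guarantee.
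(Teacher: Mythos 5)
Your high-level skeleton matches the paper's: reduce to the balanced case via Lemma~\ref{lemma:reduce-to-balanced-boys}, then obtain the expected-weight guarantee by arguing about individual boys and invoking linearity of expectation. You also correctly identify the crux: Lemma~\ref{lemma:boys-lower-bound} only guarantees that \emph{some} $\gamma n$ boys are satisfied, with no control over which ones, so a single very heavy boy could be systematically missed. But your proposal does not actually close this gap. The weight-threshold device (estimating a quantile $\tau$ from the red sample and treating only ``heavy'' boys specially) is a detour that buys nothing: even within the heavy class the weights can be arbitrarily skewed (one boy of weight $M$ among weight-$1$ boys), so you still need each \emph{individual} heavy boy to be satisfied with probability $\Omega(1)$ --- and if you could prove that, the same argument would work for all boys with no threshold at all. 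The real work is therefore entirely in your final step, ``each heavy boy lands on a satisfying slot with constant probability,'' and the claim that ``essentially the same catastrophe-probability bounds'' deliver this is not correct as stated: those bounds control the \emph{count} of satisfied boys, not the satisfaction probability of a designated boy, and the proposed ``rank-band randomization'' is left unspecified and unanalyzed (a random offset of the band does not obviously cover boys whose rank falls near either extreme, nor is it clear the catastrophe analysis survives the modification).

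The paper's resolution (Lemma~\ref{lemma:boys-probability-lower-bound}) is weight-oblivious and cleaner: run the algorithm of Lemma~\ref{lemma:boys-lower-bound} with $r=0$, so that every white boy $x$ is aimed at the slot $g_{rank(x)}$ rather than $g_{rank(x)-r}$. Removing the slack $r$ destroys the high-probability guarantee --- $\Pr(E_3)$ is now only bounded away from $1$, proved via the positive-drift random-walk estimate $\Pr(\forall k\ \sum_{i=1}^{k} Z_i>0)>0$ rather than a law of large numbers --- but in exchange \emph{every} boy has probability $\Omega(1)$ of being white, claiming a fresh rank slot, and surviving all catastrophes, hence of being satisfied. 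Linearity of expectation over the weights then gives $\Omega(w(H_B))$ immediately, and Lemma~\ref{lemma:reduce-to-balanced-boys} lifts this to the unbalanced case. If you want to repair your write-up, drop the threshold machinery and work out a per-boy version of the catastrophe analysis; that is where the actual content of the theorem lies.
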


\begin{proof}[Proof of Theorem~\ref{theorem:positive-wCg}]
Let $g^{*}$ be a heaviest girl in $H_{G}$ (which is also a heaviest girl
in $G$).
Partition $H_{G}$ into into \emph{weight classes}
$C_{1}, C_{2}, \dots$
so that
\[
C_{i}
\, = \,
\left\{
g \in H_{G} \mid w(g^{*}) / 2^{i} < w(g) \leq w(g^{*}) / 2^{i - 1}
\right\} \, .
\]
Taking
$k = O (\log n)$,
we observe that
$w \left( \bigcup_{i > k} C_{i} \right)
\leq
w(g^{*})$,
hence
$w \left( C_{1} \cup \cdots \cup C_{k} \right) \geq w(H_{G} / 2)$.

Let $i^{*}$ be an index
$1 \leq i \leq k$
that maximizes $w(C_{i})$.
Apply the algorithm promised by Theorem~\ref{theorem:positive-Cg-Cb}
(satisfying girls) to the problem instance that consists of the girls in
$C_{i^{*}}$ (whose weights are uniform up to factor $2$) and all boys in $B$;
the remaining girls are matched arbitrarily or left unmatched.
Theorem~\ref{theorem:positive-Cg-Cb} ensures that
$\Omega (|C_{i^{*}}|)$
girls in $C_{i^{*}}$ are satisfied as
$|C_{i^{*}}| \leq |H_{G}| \leq |B|$.
The assertion follows since
$w(C_{i^{*}}) \geq \Omega (w(H_{G}) / \log n)$.
\end{proof}

Theorem~\ref{theorem:positive-wCb} is established by combining the following
lemma with Lemma~\ref{lemma:reduce-to-balanced-boys}.

\begin{lemma}\label{lemma:boys-probability-lower-bound}
There exists a universal constant
$p > 0$
such that \DM{} has a strategy that satisfies each individual boy with
probability at least $p$ in any balanced instance
($|G| = |B|$).
\end{lemma}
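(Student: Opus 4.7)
My plan hinges on a per-boy sufficient condition for satisfaction together with a Kesselheim--Radke--T\"onnis--V\"ocking (2013) style online matching rule adapted to the ordinal setting. For a boy $b$ of global rank $k$ matched to girl $g_j$, one can check that he is satisfied if and only if $j \leq k$ and each of the top $j-1$ girls is matched to a boy of global rank at most $k-1$ (otherwise some $g_i$ with $i < j$ is matched to a boy worse than $b$, producing a blocking pair). Thus it suffices to design a conservative algorithm that, with constant probability, matches $b$ to some $g_j$ with $j \leq k$ and simultaneously maintains the ``correct'' identities of the top $j-1$ matches. As a clean sufficient condition, a globally rank-monotone matching ($b_1 \succ b_2 \Rightarrow M(b_1) \succ M(b_2)$) would satisfy every matched boy; in the balanced conservative case this actually forces $M$ to be the identity matching $b_k \mapsto g_k$, which is too strong to hope for with constant probability, so I rely on the weaker per-boy condition.

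The algorithm I would analyze is the following adaptation of the Kesselheim rank-matching rule: designate the first $\lceil n/c \rceil$ arriving boys as \emph{observers} and match each of them to the worst available girl (in arrival order), filling up the bottom ranks of $G$; for each subsequent boy $b$ at time $t$ with rank $r_t$ among the first $t$ arrivals, try to match $b$ to $g_{r_t}$, and on collision fall back to the worst available girl. A short pairwise argument shows that two post-observation boys $b_1 \succ b_2$ that are both matched via the main rank-matching rule (without collision) have rank-monotone assignments, since $r_{t_1}(b_1) \leq r_{\max(t_1,t_2)}(b_1) < r_{\max(t_1,t_2)}(b_2)$. The observer and fallback matches are confined to the bottom girls, so they do not interact with the ``top'' portion of the matching that determines whether the target boy $b$ is satisfied.

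For the analysis, I fix a boy $b$ of rank $k$ and lower bound $\Pr[b \text{ satisfied}]$ by a constant via three steps: (i) with constant probability, $b$ arrives in the matching phase with $r_t = k$ (using hypergeometric concentration on arrivals); (ii) conditionally, $g_k$ is available when $b$ arrives with probability at least $1/e - o(1)$ via the standard Kesselheim coupling argument; and (iii) conditional on (i)--(ii), each of the top $k-1$ girls is matched to a boy of global rank $\leq k-1$ with constant probability. The \textbf{main obstacle} is step (iii): it requires a joint lower bound on up to $k-1$ potentially correlated ``correctly-matched'' events, and a naive union bound degrades like $p^{k-1}$. To overcome this, I would couple the online algorithm with the offline identity matching and exploit positive correlation among the relevant events (roughly, conditioning on one top girl being matched incorrectly makes others more likely to be correct, because the ``budget'' of out-of-order arrivals is limited). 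The very top boys (rank $O(1)$), for which the estimator $r_t$ is unreliable, would be handled by mixing in a classical secretary subroutine (chosen with constant probability) that identifies the best boy and matches him to $g_1$, yielding the standard $1/e$ satisfaction bound for each specific top boy.
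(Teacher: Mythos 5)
There are genuine gaps here, and the most serious one is quantitative. In step (i) you claim that with constant probability the target boy $b$ of global rank $k$ arrives with online rank $r_t=k$. But $r_t(b)=k$ at $b$'s arrival time means \emph{all} $k-1$ globally better boys have already arrived, i.e., $b$ is the last of the top $k$ boys to appear; under a uniform arrival order this has probability exactly $1/k$, which vanishes for most boys. Hypergeometric concentration gives $r_t\approx kt/n$, not $r_t=k$, so your algorithm typically aims $b$ at a girl far \emph{above} $g_k$, and then your own necessary condition (all girls above his match must receive boys of rank $<k$) becomes a different and harder event than the one you analyze. A second flaw is the pairwise monotonicity claim: if the better boy $b_1$ arrives \emph{after} $b_2$, your chain gives $r_{t_1}(b_1)<r_{t_1}(b_2)$ and $r_{t_2}(b_2)\le r_{t_1}(b_2)$, which does not imply $r_{t_1}(b_1)<r_{t_2}(b_2)$; a weak boy arriving early has a deflated rank estimate and can permanently steal a top girl from a better boy arriving later. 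Finally, step (iii) --- the joint lower bound on the conjunction of up to $k-1$ correlated ``correctly matched'' events --- is exactly the crux of the lemma, and ``positive correlation'' is asserted, not proved; as stated the argument does not close.

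The paper's proof sidesteps all three difficulties with one structural idea you are missing: the rank used for placement is computed relative to a \emph{fixed} observer set (the red boys), so it is determined once and for all at a boy's arrival and never drifts. Each white boy $x$ is matched to $g_{rank(x)}$ (with $rank(x)$ taking values in $\{0,\dots,R\}$, not the global rank), and the satisfaction of a correctly placed white boy is then a \emph{deterministic} consequence of the algorithm's invariant: every girl $g_j$ with $j<rank(x)$ ends up matched to a boy of observer-rank at most $j<rank(x)$, hence globally preferred to $x$, so no conjunction of probabilistic events is needed. The only probabilistic ingredients are that a fixed boy is white and wins his rank slot (constant probability) and that the algorithm terminates without ``catastrophe,'' which is controlled by showing a biased random walk $\sum_i Z_i$ stays positive with probability bounded away from zero. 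If you want to salvage your route, replace $r_t$ by a rank against a frozen sample and replace step (iii) by this deterministic invariant; as written, the proposal does not establish the lemma.
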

\begin{proof}
The algorithm is similar to the algorithm in the proof of
Lemma~\ref{lemma:boys-lower-bound}.
The only difference is that here we set $r=0$ (instead of $\frac 1 4 \delta
n$).
As a result, we manage to guarantee a positive constant lower bound on the
probability of being satisfied \emph{for every single boy}.
Alas, the probability that at least $(\frac 1 5 - \delta)n$ boys are satisfied drop from being close to one to being merely bounded away from zero.

For completeness, we briefly repeat parts of the description of the algorithm
and other ideas that appear also in the proof of Lemma~\ref{lemma:boys-lower-bound}.
Fix $0<\gamma<1/5$ and $n\in\mathbb N$.
Set $\delta=1/5-\gamma$ and $a= 2\gamma + \delta$.
Let $X=(R,W,Y)$ be a multinomial random variable with parameters
$(a,a,1-2a;n)$.
Each realization of $X$ prescribes a deterministic algorithm parameterized by $(R,W,Y)$.

As before the first $R$ boys are called ``red,'' the next $W$ boys white, and the last $Y$ boys ``yellow.''If $2R>n$ we report Catastrophe of Type I.

The white boys are indexed in
decreasing order of preference $b_1\succ b_2 \succ \cdots \succ
b_{R}$.
The girls are indexed in decreasing order of
preference, $g_1\succ g_2\cdots \succ g_n$.
The red boys are matched with the least preferred girls $g_n,\ldots,g_{n-R+1}$ in an arbitrary order.

Each one of the remaining boys $x$ is associated a number $rank(x)\in\{0,\ldots,R\}$ according to how he compares with the red boys,
\[
rank(x)=
\begin{cases}
0&\text{if $x\succ b_1$,}\\
i&\text{if $b_i\succ x\succ b_{i+1}$,}\\
R&\text{if $b_R \succ x$.}
\end{cases}
\]

When a white boy $x$ arrives we match him either with $g_{rank(x)}$ if
$g_{rank(x)}$ is unmatched yet, or with the least preferred unmatched girl $g_i$.
In the latter case, if $i \leq R$ the algorithm reports Catastrophe of Type II and halts.

When a yellow boy $x$ arrives we match him with the most preferred unmatched
girl $g_i$ subject to $i\geq rank(x)$.
I.e., $i=\min\{j\in [n]\setminus [rank(x)-r-1]: \text{$g_j$ is unmatched
yet}\}$.
If that set is empty, the algorithm reports Catastrophe of Type III and halts.

Define
$R'=\min\{R,\left\lceil(a-\tfrac{1}{4}\delta)n\right\rceil\}$
and
\[
R''=\left\vert[R']\setminus\{rank(x):\text{$x$ is a white boy}\}\right\vert \,
,
\]
and consider the following (bad) events:
\begin{align*}
E_1&=\{\text{Catastrophe of Type I reported}\},\\
E_2&=\{\text{Catastrophe of Type II reported}\},\\
E_3&=\{\text{Catastrophe of Type III reported}\},\\
E_4&=\{R<(a-\tfrac{1}{4}\delta)n\},\\
E_5&=\{R''> \tfrac 1 2 an\}.
\end{align*}

The proof that $\Pr(E_1\cup E_2\cup E_4\cup E_5)\to 0$, as $n\to\infty$, follows the same lines as in the proof of Lemma~\ref{lemma:boys-lower-bound}.

Unlike the proof of Lemma~\ref{lemma:boys-lower-bound}, here $\Pr(E_3)$ is merely bounded away from one, rather than close to zero.

Let $x_1,x_2,\ldots$ be the colors of the boys in decreasing order of
preference extended to an infinite sequence of i.i.d.\ random variables.
Let $t_1<t_2<\cdots$ be the occurrences of ``red,'' namely,
$\{t_i\}_{i\in\mathbb N}=\{j:x_{j}=\text{red}\}$.
Let $I_i$ be the indicator of the event $\{x_j\neq \text{white}, \forall\,
t_i<j<t_{i+1}\}$.
Define $Y_i=|\{t_i<j<t_{i+1}:x_j=\text{yellow}\}|$, i.e., the number of yellow
boys between the $i$th and $(i+1)$th red boys.
The distribution of $Y_i+1$ is geometric with success probability $a/(1-a)$,
namely, $\Pr(Y_i=k)=\left(\frac {1-2a}{1-a}\right)^k \frac {a}{1-a}$ ($k=0,1,\ldots$); hence $\mathbb E[Y_i]=(1-a)/a -1 > \tfrac 1 2$.
Consider the i.i.d.\ random variables $Z_i=Y_i-I_i$.

We show that
\begin{equation}\label{equation:biased-random-walk}
\Pr(\forall k\in\mathbb N\ \sum_{i=1}^k Z_i> 0)>0.
\end{equation}
By the strong law of large numbers $\tfrac 1 n \sum_{i=1}^n Z_i\to E[Z_1]>0$,
almost surely.
It follows that there is $l\in\mathbb N$ such that
$\Pr(\forall k\ \sum_{i=1}^k Z_i> -l)>0$.
Since
$\Pr(Z_1\geq 1)>0$,
\begin{multline*}
\Pr(\forall k\in\mathbb N\ \sum_{i=1}^k Z_i> 0)\\
\geq \Pr(Z_1,\ldots,Z_l\geq 1,\ \forall k\ \sum_{i=l+1}^{l+k} Z_i> -l)=\Pr(Z_1\geq 1)^l\Pr(\forall k\ \sum_{i=1}^{k} Z_i> -l)>0.
\end{multline*}

Next, the proof is concluded by showing that $E_3\subset\{\exists k\in\mathbb N\ \sum_{i=1}^k Z_i\leq 0\}$, since the probability of the latter event is smaller than one, by \eqref{equation:biased-random-walk}.

Suppose that the algorithm reports Catastrophe of Type III upon the arrival of
a yellow boy $x$ at time $t$.
Let $k=\max\{i:\text{$g_i$ is unmatched at time $t$}\}$.
Since it is a Catastrophe of Type III, we have $k < rank(x)$.
The girls $g_i$, $i={k-r+1},\ldots,n$, are all matched with boys who are
either red or white, or have rank at least $k+1$.
To see this, consider any such $g_i$ who is matched with some yellow boy $y$.
If $rank(y)<i$, then at the time $y$ arrived, all the girls $g_j$,
$j={rank(y),\ldots,i-1}$, were matched; therefore, at the present time, all the girls  $g_j$, $j= rank(y),\ldots,n$, are matched; therefore $k< rank(y)$.
Since the number of unmatched boys is equal to the number of unmatched girls at the time just before the catastrophe, we have
\begin{align*}
\sum_{i=1}^{k} I_i &  \\
&= \text{``\#girls who are either unmatched or matched to yellow boys of rank at most $k$''}\\
&= \text{``\#yellow boys who are either unmatched or have rank at most $k$''}\\
&> \text{``\#yellow boys who have rank at most $k$''}\\
&\geq \sum_{i=1}^{k} Y_i.
\end{align*}
It follows that $\sum_{i=1}^k Z_i <0$.
\end{proof}

\bibliographystyle{plain}
\bibliography{stabsec}

\begin{thebibliography}{10}

\bibitem{ABM05}
David~J. Abraham, P{\'{e}}ter Bir{\'{o}}, and David Manlove.
\newblock "almost stable" matchings in the roommates problem.
\newblock In {\em Proceedings of Approximation and Online Algorithms, Third
  International Workshop ({WAOA})}, pages 1--14, 2005.

\bibitem{AggarwalGKM2011}
Gagan Aggarwal, Gagan Goel, Chinmay Karande, and Aranyak Mehta.
\newblock Online vertex-weighted bipartite matching and single-bid budgeted
  allocations.
\newblock In {\em Proceedings of the Twenty-second Annual ACM-SIAM Symposium on
  Discrete Algorithms ({SODA})}, pages 1253--1264, 2011.

\bibitem{AshlagiBJM16}
Itai Ashlagi, Maximilien Burq, Patrick Jaillet, and Vahideh~H. Manshadi.
\newblock On matching and thickness in heterogeneous dynamic markets.
\newblock In {\em Proceedings of the 2016 {ACM} Conference on Economics and
  Computation ({EC})}, page 765, 2016.

\bibitem{AshlagiKL2016}
Itai Ashlagi, Yashodhan Kanoria, and Jacob~D Leshno.
\newblock Unbalanced random matching markets: The stark effect of competition.
\newblock {\em Journal of Political Economy (to appear)}, 2016.

\bibitem{BIK07}
Moshe Babaioff, Nicole Immorlica, and Robert Kleinberg.
\newblock Matroids, secretary problems, and online mechanisms.
\newblock In {\em Proceedings of the Eighteenth Annual {ACM-SIAM} Symposium on
  Discrete Algorithms ({SODA})}, pages 434--443, 2007.

\bibitem{BLY15}
Mariagiovanna Baccara, SangMok Lee, and Leeat Yariv.
\newblock Optimal dynamic matching.
\newblock Available at SSRN: http://ssrn.com/abstract=2641670, 2015.

\bibitem{BirnbaumM2008}
Benjamin~E. Birnbaum and Claire Mathieu.
\newblock On-line bipartite matching made simple.
\newblock {\em {SIGACT} News}, 39(1):80--87, 2008.

\bibitem{ChenHKLM2015}
Ning Chen, Martin Hoefer, Marvin K{\"{u}}nnemann, Chengyu Lin, and Peihan Miao.
\newblock Secretary markets with local information.
\newblock In {\em Proceedings of Automata, Languages, and Programming - 42nd
  International Colloquium ({ICALP})}, pages 552--563, 2015.

\bibitem{DevanurJK2013}
Nikhil~R. Devanur, Kamal Jain, and Robert~D. Kleinberg.
\newblock Randomized primal-dual analysis of {RANKING} for online bipartite
  matching.
\newblock In {\em Proceedings of the Twenty-Fourth Annual {ACM-SIAM} Symposium
  on Discrete Algorithms ({SODA})}, pages 101--107, 2013.

\bibitem{D13}
Michael Dinitz.
\newblock Recent advances on the matroid secretary problem.
\newblock {\em ACM SIGACT News}, 44(2):126--142, 2013.

\bibitem{EmekKW2016}
Yuval Emek, Shay Kutten, and Roger Wattenhofer.
\newblock Online matching: haste makes waste!
\newblock In {\em Proceedings of the 48th Annual {ACM} {SIGACT} Symposium on
  Theory of Computing ({STOC})}, pages 333--344, 2016.

\bibitem{EH08}
Kimmo Eriksson and Olle H\"aggstr\"om.
\newblock Instability of matchings in decentralized markets with various
  preference structures.
\newblock {\em International Journal of Game Theory}, 36(3):409--420, 2008.

\bibitem{FSV15}
Moran Feldman, Ola Svensson, and Rico Zenklusen.
\newblock A simple o(log log(rank))-competitive algorithm for the matroid
  secretary problem.
\newblock In {\em Proceedings of the Twenty-sixth Annual {ACM-SIAM} Symposium
  on Discrete Algorithms ({SODA})}, pages 1189--1201, 2015.

\bibitem{F89}
T.~S. Ferguson.
\newblock Who solved the secretary problem?
\newblock {\em Statistical Science}, 4(3):282--296, 1989.

\bibitem{GS62}
David Gale and L.~S. Shapley.
\newblock College admissions and the stability of marriage.
\newblock {\em American Math. Monthly}, 69:9--15, 1962.

\bibitem{G60}
Martin Gardner.
\newblock {\em New Mathematical Diversions from Scientific American}, chapter
  3, problem 3.
\newblock Simon and Schuster, 1966.
\newblock Reprint of the original column published in February 1960 with
  additional comments.

\bibitem{GoelM2008}
Gagan Goel and Aranyak Mehta.
\newblock Online budgeted matching in random input models with applications to
  adwords.
\newblock In {\em Proceedings of the Nineteenth Annual {ACM-SIAM} Symposium on
  Discrete Algorithms ({SODA})}, pages 982--991, 2008.

\bibitem{GI89}
Dan Gusfield and Robert~W. Irving.
\newblock {\em The Stable Marriage Problem: Structure and Algorithms}.
\newblock MIT Press, 1989.

\bibitem{HMV16}
Avinatan Hassidim, Yishay Mansour, and Shai Vardi.
\newblock Local computation mechanism design.
\newblock {\em ACM Trans. Econ. Comput.}, 4(4):21:1--21:24, 2016.

\bibitem{KarpVV1990}
R.~M. Karp, U.~V. Vazirani, and V.~V. Vazirani.
\newblock An optimal algorithm for on-line bipartite matching.
\newblock In {\em Proceedings of the Twenty-second Annual {ACM} Symposium on
  Theory of Computing ({STOC})}, pages 352--358, 1990.

\bibitem{KKN15}
Thomas Kesselheim, Robert Kleinberg, and Rad Niazadeh.
\newblock Secretary problems with non-uniform arrival order.
\newblock In {\em Proceedings of the Forty-Seventh Annual ACM on Symposium on
  Theory of Computing ({STOC})}, pages 879--888, 2015.

\bibitem{KesselheimRTV13}
Thomas Kesselheim, Klaus Radke, Andreas T{\"{o}}nnis, and Berthold
  V{\"{o}}cking.
\newblock An optimal online algorithm for weighted bipartite matching and
  extensions to combinatorial auctions.
\newblock In {\em Proceedings of the 21st Annual European Symposium on
  Algorithms ({ESA})}, pages 589--600, 2013.

\bibitem{KMV94}
Samir Khuller, Stephen~G. Mitchell, and Vijay~V. Vazirani.
\newblock On-line algorithms for weighted bipartite matching and stable
  marriages.
\newblock {\em Theor. Comput. Sci.}, 127(2):255--267, 1994.

\bibitem{L14}
Oded Lachish.
\newblock O(log log rank) competitive ratio for the matroid secretary problem.
\newblock In {\em Proceedings of the 55th {IEEE} Annual Symposium on
  Foundations of Computer Science ({FOCS})}, pages 326--335, 2014.

\bibitem{Miyazaki2014}
Shuichi Miyazaki.
\newblock On the advice complexity of online bipartite matching and online
  stable marriage.
\newblock {\em Inf. Process. Lett.}, 114(12):714--717, 2014.

\bibitem{NaorW2015}
Joseph Naor and David Wajc.
\newblock Near-optimum online ad allocation for targeted advertising.
\newblock In {\em Proceedings of the Sixteenth {ACM} Conference on Economics
  and Computation ({EC})}, pages 131--148, 2015.

\bibitem{OR15}
Rafail Ostrovsky and Will Rosenbaum.
\newblock Fast distributed almost stable matchings.
\newblock In {\em Proceedings of the 2015 {ACM} Symposium on Principles of
  Distributed Computing ({PODC})}, pages 101--108, 2015.

\bibitem{R16}
Aviad Rubinstein.
\newblock Beyond matroids: Secretary problem and prophet inequality with
  general constraints.
\newblock In {\em Proceedings of the Forty-eighth Annual {ACM} Symposium on
  Theory of Computing ({STOC})}, pages 324--332, 2016.

\end{thebibliography}

\clearpage
\appendix
\begin{center}
\Large{APPENDIX}
\end{center}

\section{Missing Proofs}
\label{appendix:missing-proofs}

\begin{proof}[Proof of Observation~\ref{observation:good-event}]
Fix
$q = \lceil k / \ell \rceil$
and observe that for every
$r \leq \ell$,
we have
\[
\Pr(R > r)
\, = \,
\frac{k - q}{k}
\cdot
\frac{k - q - 1}{k - 1}
\cdots
\frac{k - q - (r - 1)}{k - (r - 1)} \, .
\]
It follows that
\[
\Pr(R > r)
\, \leq \,
\frac{k - k / \ell}{k}
\cdot
\frac{k - k / \ell - 1}{k - 1}
\cdots
\frac{k - k / \ell - (r - 1)}{k - (r - 1)}
\, \leq \,
\left( 1 - \frac{1}{\ell} \right)^{r}
\, < \,
e^{-r / \ell}
\]
and
\[
\Pr(R > r)
\, \geq \,
\frac{k - k / \ell - 1}{k}
\cdot
\frac{k - k / \ell - 2}{k - 1}
\cdots
\frac{k - k / \ell - r}{k - (r - 1)}
\, > \,
\left( 1 - \frac{k / \ell + 1}{k - r} \right) \, .
\]
Taking $c$ to be sufficiently large so that
$r \leq \ell \leq k / 3$,
we ensure that
\[
\ell \leq k - 2 r
\, \Longleftrightarrow \,
k + \ell \leq 2 k - 2 r
\, \Longleftrightarrow \,
\frac{k / \ell + 1}{k - r} \leq \frac{2}{\ell} \, ,
\]
thus
\[
\Pr(R > r)
\, > \,
\left( 1 - \frac{2}{\ell} \right)^{r}
\, > \,
e^{-4 r / \ell} \, ,
\]
where the second transition follows by taking $c$ to be sufficiently large so
that
$2 / \ell \leq 0.79$.
Therefore,
\[
\Pr(\ell / 5 < R \leq \ell)
\, = \,
\Pr(R > \ell / 5) - \Pr(R > \ell)
\, > \,
e^{-4 / 5} - e^{-1}
\]
which establishes the assertion as
$e^{-4 / 5} - e^{-1} \approx 1 / 12.28$.
\end{proof}

\begin{proof}[Proof of Observation~\ref{observation:positive-Cp}]
Recall the classical secretary problem in which \DM{} has to stop upon the arrival of some $x$ and the objective is to maximize the probability that $x$ is the most preferred boy. The optimal strategy in the secretary problem is to wait until time $k\approx \tfrac 1 e n$, and then stop upon the first arrival of a boy who is more preferred than all of the previous boys. The probability of success converges to $\tfrac 1 e$, as $n$ grows.

From the solution to the secretary problem we device a matching strategy as follows: in the first $k=\lfloor\tfrac 1 e n\rfloor$ steps, match the boys with arbitrary girls who are neither the most preferred nor the least preferred girl. Continue in the same manner while reserving the most and least preferred girls for the first arrivals of boys who are either more preferred or less preferred than all previous boys. Upon the first arrival of a boy $x$ who is more preferred than all previous boys, match $x$ with the most preferred girl. Similarly, match the first boy who is less preferred than all previous boys with the least preferred girl. At times $n-1$ and $n$ match the arriving boys arbitrarily.

In any matching in which the most (resp. least) preferred boy and girl are matched together, they form a satisfied pair; therefore, by the guarantee of the secretary problem solution and the additivity of expectation, the proposed algorithm guarantees an expected number of $\frac 2 e -o(1)$ satisfied pairs.
\end{proof}

\begin{proof}[Proof of Lemma~\ref{lemma:adversarial-pairs}]
We establish the assertion for conservative algorithms;
the proof for general algorithms follows by
Lemma~\ref{lemma:conservative-Cb-Cp}.
We use the distribution $D(\frac 1 2,\ldots,\frac 1 2)$, i.e., each boy is
either more or less preferred than all of the boys that come after him with
equal probabilities.

Let $v_n$ denote the expected number of satisfied under an optimal online
assignment. We show that $v_{n+1}\leq\max\{v_n,\frac 1 2 (v_n +1)\}$. Since
$v_1=1$, we have, by induction on $n$, that $v_n\leq 1$, for all $n$.

We divide into two cases: (i) \DM{} matches $\pi(1)$ with either the most or least preferred girl; (ii) \DM{} matches $\pi(1)$ with with some other girl. Denote the rank of $\pi(1)$ by $r\in\{1,n+1\}$ (assuming there are $n+1$ boys and $n+1$ girls). We show that the expected number of satisfied pairs is at most $\frac 1 2 (v_n+1)$, in case (i), and $v_n$ in case (ii).

In Case (i), with probability $\frac 1 2$, $\pi(1)$ is matched with the girl of rank $r$. In this event they form a satisfied pair and the expected number of additional satisfied pairs is at most $v_n$. In the complement event that $\pi(1)$ is matched with the girl of rank $n+2-r$, none of the pairs is satisfied. Therefore, the expected number of satisfied pairs in case (i) is at most $\frac 1 2 (v_n+1)$.

In case (ii), $\pi(1)$ does not belong to a satisfied pair. The expected number of satisfied pairs among the remaining boys and girls is at most $v_n$. Therefore, the expected number of satisfied pairs in case (ii) is at most $v_n$.
\end{proof}

\end{document}